\newcommand{\pyes}{\ding{108}}%
\newcommand{\psome}{\ding{119}}%
\newcommand{\pno}{\ding{109}}%
\definecolor{lightlightgray}{rgb}{0.95, 0.95, 0.95}
\newtheorem{definition}{Definition}
\newtheorem{lemma}{Lemma}
\tikzset{%
  fontscale/.style={font=\relsize{#1}},
  >=stealth',
  box/.style={
    rectangle,
    draw=black,
    text width=8em,
    minimum height=2em,
  },
  small box/.style={
    rectangle,
    draw=black,
    text width=2.5em,
    minimum height=2.5em,
    text centered
  },
  small box nc/.style={
    rectangle,
    draw=black,
    text width=2.5em,
    minimum height=2.5em
  },
  arrow/.style={
    ->,
    shorten <=2pt,
    shorten >=2pt,
  }
}
\newcommand\lbox[2]{#1\\\vspace{-0.5em}\hrulefill\\\footnotesize{#2}}
\newcommand\lboxNR[1]{#1}
\newcommand\arrowt[1]{\footnotesize{#1}}
\tikzset{>=latex}
\tikzstyle{client-iconspace} = [rectangle,minimum width=1.5cm]
\tikzstyle{device-iconspace} = [rectangle,minimum width=.8cm]
\tikzstyle{cloud-iconspace} = [rectangle,minimum width=1cm]
\tikzstyle{user-iconspace} = [rectangle,minimum width=1.8cm]
\tikzstyle{icon}=[rectangle,draw=white!50,fill=white,align=center,minimum height=1em,minimum width=2em]
\tikzstyle{iconspace}=[rectangle,draw=white!50,fill=white,align=center,minimum width=2em]
\tikzstyle{rect}=[rectangle,draw=black!50,thick,fill=white,align=center,minimum height=0.75cm,minimum width=2cm]
\tikzstyle{rectnone}=[rectangle,draw=none,thick,fill=white,align=center,minimum height=2.2cm,minimum width=1.8cm]
\tikzstyle{txt}=[rectangle,draw=white,fill=white,align=center,minimum width=2cm]
\newcommand{\name}{2FE\xspace}
\newcommand{\walletname}{2FW\xspace}
\mathchardef\mhyphen="2D
\newcommand\adv{\ensuremath{\mathcal{A}}}
\newcommand\set[1]{\ensuremath{\lbrace #1 \rbrace}}
\newcommand\Z{\ensuremath{\mathbb{Z}}} 
\newcommand\bits{\ensuremath{\{0,1\}}} 
\newcommand\pick{\ensuremath{\xleftarrow{\text{\tiny{\$}}}}} 
\newcommand\Enc{\ensuremath{\mathsf{Enc}}} 
\newcommand\Dec{\ensuremath{\mathsf{Dec}}} 
\newcommand\SR{\ensuremath{\mathsf{SR}}}
\newcommand\tPRF{\ensuremath{\mathsf{tPRF}}}
\newcommand\tSIG{\ensuremath{\mathsf{tSIG}}}
\newcommand\Sig{\ensuremath{\mathsf{Sig}}}
\newcommand\Ver{\ensuremath{\mathsf{Ver}}}
\newcommand\cmd[1]{\texttt{(}\ensuremath{\mathtt{#1}}\texttt{)}}
\newcommand\cmda[2]{\texttt{(}\ensuremath{\mathtt{#1},#2}\texttt{)}}
\newcommand\ctxt{\ensuremath{\mathsf{ctxt}}}
\newcommand\C{\ensuremath{\mathsf{P}}}  
\newcommand\D{\ensuremath{\mathsf{S}}}  
\newcommand\U{\ensuremath{\mathsf{U}}}  
\renewcommand\S{\ensuremath{\mathsf{C}}}  
\newcommand\PRFkey{\ensuremath{\mathsf{K}}}
\newcommand\keyC{\ensuremath{\PRFkey_{\C}}}
\newcommand\keyD{\ensuremath{\PRFkey_{\D}}}
\newcommand\Cshare[1]{\ensuremath{u_{#1}}}
\newcommand\Dshare[1]{\ensuremath{v_{#1}}}
\newcommand\nizkDDH{\ensuremath{\mathsf{NIZK}}}
\newcommand\TPPRF{\ensuremath{\mathsf{2PPRF}}}
\newcommand\RF{\ensuremath{\mathsf{RF}}}  
\newcommand\TG{\ensuremath{\mathsf{TG}}}  
\newcommand\pk{\ensuremath{\mathsf{pk}}}
\newcommand\sk{\ensuremath{\mathsf{sk}}}
\newcounter{protc}
\renewcommand{\theprotc}{\arabic{protc}}
\newenvironment{protocol}[2]{%
  \refstepcounter{protc}%
  \label{#1}%
  \begin{framed}
    \noindent\textbf{Protocol~\theprotc.} #2\\
    \rule{\textwidth}{0.2pt}
  }%
  {
    \end{framed}%
  }
\newcounter{funcc}
\renewcommand{\thefuncc}{\arabic{funcc}}
\newenvironment{functionality}[2]{%
  \refstepcounter{funcc}%
  \label{#1}%
  \begin{framed}
    \noindent\textbf{Functionality~\thefuncc.} #2\\
    \rule{\textwidth}{0.2pt}
  }%
  {
    \end{framed}%
  }
\titlespacing{\section}{0pt}{3ex}{1ex}
\titlespacing{\subsection}{0pt}{2ex}{0.5ex}
\titlespacing{\subsubsection}{0pt}{0.5ex}{0.5ex}
\let\markeverypar\everypar
\newtoks\everypar
\font
\renewcommand{\paragraph}[1]{\vspace{2pt}\textbf{#1}}
\newcommand{\myrot}[1]{\rotatebox{90}{\makecell{#1}}}
\begin{document}
	
\date{}

\title{2FE: Two-Factor Encryption for Cloud Storage}

\author{
	{\rm Anders P. K. Dalskov}\\
	Aarhus University\\
	\and
	{\rm Daniele Lain}\\
	ETH Zurich
	\and
	{\rm Enis Ulqinaku}\\
	ETH Zurich
	\and
	{\rm Kari Kostiainen}\\
	ETH Zurich
	\and
	{\rm Srdjan Capkun}\\
	ETH Zurich
} 

\maketitle

\begin{abstract}
	Encrypted cloud storage services are steadily increasing in popularity, with many commercial solutions currently available. In such solutions, the cloud storage is trusted for data availability, but not for confidentiality. Additionally, the user’s device is considered secure, and the user is expected to behave correctly. 
We argue that such assumptions are not met in reality: e.g., users routinely forget passwords and fail to make backups, and users’ devices get stolen or become infected with malware. Therefore, we consider a more extensive threat model, where users’ devices are susceptible to attacks and common human errors are possible. Given this model, we analyze 10 popular commercial services and show that none of them provides good confidentiality and data availability.

Motivated by the lack of adequate solutions in the market, we design a novel scheme called \emph{Two-Factor Encryption} (2FE) that draws inspiration from two-factor authentication and turns file encryption and decryption into an interactive process where two user devices, like a laptop and a smartphone, must interact. 2FE provides strong confidentiality and availability guarantees, as it withstands compromised cloud storage, one stolen or compromised user device at a time, and various human errors. 
2FE achieves this by leveraging secret sharing with additional techniques such as oblivious pseudorandom functions and zero-knowledge proofs. We evaluate 2FE experimentally and show that its performance overhead is small.
Finally, we explain how our approach can be adapted to other related use cases such as cryptocurrency wallets.


\end{abstract}

\captionsetup[table]{skip=0pt}
\captionsetup[figure]{skip=5pt}

\section{Introduction}
\label{sec:introduction}

Personal cloud storage provides users an easy way to access and backup their data. While the major service providers are usually benign and reputable companies, online services often get compromised by external attackers and malicious insiders, which means that users' files can be leaked. Another risk that such service face is that law enforcement might coerce the service providers into exposing users' files.

Encrypted cloud storage systems attempt to address such data confidentiality concerns by allowing users to encrypt their files before uploading them to the cloud. The market of encrypted cloud storage is steadily growing, e.g., a service called Mega~\cite{mega-whitepaper} recently reported 130 million users~\cite{megausers}.

Most currently available solutions consider a model where the cloud storage is trusted for data availability, but not for data confidentiality. Such solutions (often implicitly) assume that the users' devices can be trusted for storing user credentials and encryption keys. Furthermore, such solutions assume that users pick passwords with sufficient entropy, do not forget their passwords, or take backups of their encryption keys.

We argue that such assumptions are not met in practice. Industry reports state that thousand of (corporate) laptops are stolen or go missing daily~\cite{ponemon-billion, ponemon-airport}---and many data breaches are a result of device theft~\cite{bitglass}. Infection by malware is common place in today's PC and smartphone platforms. Users are also known to pick passwords of low entropy, reuse them across services and forget their passwords~\cite{bonneau2012science,florencio2007large,wash2016understanding}. 

\paragraph{Analysis of existing services.} To enable more precise analysis of such solutions, we define an extensive threat model for encrypted cloud storage that extends the (often implicitly assumed) setting of commercial solutions in two ways. First, we assume that users' devices may get stolen or infected with malware, and that adversaries may get temporary access to them. Second, we assume that users enable ``convenience features'' like long-lasting sessions, pick insecure passwords, forget passwords, and fail to back up encryption keys.

Given such threat model, we analyze 10 popular encrypted cloud storage providers on the market (see Table~\ref{tab:comparison}). The available commercial solutions follow two main approaches. The first is to derive the encryption key from a user-chosen password. Unsurprisingly, we find that all such solutions fail to provide sufficient entropy to prevent simple brute-force attacks and all user's data is irrecoverably lost if the user forgets the password.
The second main approach is to encrypt files using a full-length pseudorandom key that is stored on the user's device. As expected, such solutions fail to provide confidentiality when the user's device is compromised. Lost or stolen user device also means loss of all data availability.

Our analysis uncovers also more subtle and unexpected problems. For example, we find that some services combine password-based encryption with long-lived sessions. Probably the rationale of introducing such features was to improve usability, but it turns out that such solutions combine the main weaknesses of both approaches: encrypted files can be brute-forced by the untrusted cloud and also read by a party that steals the user's device. 
We conclude that there is no encrypted cloud storage service on the market that would provide good data confidentiality and availability under a realistic threat model.

\paragraph{Our solution.} Motivated by the poor state of affairs in encrypted cloud storage, we propose a new scheme called \emph{Two-Factor Encryption} (\name)
for scenarios where the user has two devices, such as a laptop and smartphone. We denote the device that is used to encrypt and decrypt files as \emph{primary}, and the assisting device \emph{secondary}.
We rule out possible designs that duplicate encryption keys to both devices, because such solutions do not tolerate theft or malware infection on either device. Instead, we take \emph{secret-sharing} as the starting point of our solution, as it allows us to build a more robust solution where one of the two user devices may be exposed to the adversary at a time. 

We observe that a simple secret-sharing scheme, where a master encryption key is split into two shares that are stored on primary and secondary, is insufficient to solve our problem. One challenge is that a lost or stolen device (and thus one lost key share) would mean that the user can no longer recover the full encryption secret. We address this challenge by using the untrusted cloud storage as a third ``virtual device'' which stores a recovery key share that is released to the user after successful identity verification (which may happen out-of-band) in case of device theft or permanent loss. 

Another challenge is a compromised secondary device that participates in the process of deriving per-file encryption keys. A malicious secondary may attempt to learn primary's key share or the resulting encryption key which would violate data confidentiality. Alternatively, a compromised secondary may manipulate the key derivation process such that reconstruction of the same encryption key is not possible afterwards which would violate data availability. To address such problems, we use Oblivious Pseudo-Random Functions \cite{10.1007:978-3-540-30576-7_17} in the per-file encryption key derivation process and complement them with simple zero-knowledge proofs that makes detection of misbehaving secondary possible on the primary device. 
We explain further challenges and how to address them in Section~\ref{sec:challenges}.
 
Our solution, \name, provides strong data confidentiality and availability guarantees. The encrypted files cannot be revealed using brute-force attacks, because each per-file encryption key is sufficiently long and pseudorandom. 
Theft or compromise of one of the user devices is also insufficient to read the encrypted files, because the used per-file encryption keys cannot be reconstructed using a single key share alone. \name provides also good data availability, because all encrypted data remains accessible to the user even if one of the two devices is lost or stolen, or if the user forgets his authentication credentials. To the best of our knowledge, \name is the first and only encrypted cloud storage solution that provides such strong confidentiality and availability, as shown by Table~\ref{tab:comparison}.

The basic variant of \name requires that the user owns and enrolls two devices, but only interacts the primary during file encryption and decryption which preserves the user experience of standard cloud storage. For additional protection against infected primary, our solution can be complemented with notifications or prompts on the secondary device that allow the user to detect or prevent unauthorized file decryption. 

We implemented a prototype of \name and evaluated it using non-encrypted Dropbox storage as a baseline. Our measurements show that added delay on file upload and download times is small (200-380 ms). Compared to non-encrypted file upload and download, the performance overhead is moderate (e.g., 5-20\%), depending on the size of the file.

Finally, we explain that the same approach can be adapted to other use cases as well. To show one such example, we present a novel design for a cryptocurrency wallet, called \emph{Two-Factor Wallet} (2FW), that provides improved confidentiality and availability compared to existing wallet solutions.

\paragraph{Contributions and roadmap.} To summarize, this paper makes the following contributions: 

\begin{itemize} 
	\item \emph{Analysis of commercial services.} We define an extensive but realistic threat model for encrypted cloud storage and analyze 10 popular commercial services using it. We find that none of them provides satisfactory data confidentiality and availability (Section~\ref{sec:motivation}). 
	
	\item \emph{Encrypted cloud storage scheme.} We present a novel encrypted cloud storage solution called Two-Factor Encryption (\name) that combines secret-sharing with additional protections such as oblivious pseudo-random functions and zero-know proofs (Section~\ref{sec:2fe_overview}). We show that \name achieves significantly better data confidentiality and availability compared to any previous solution (Section~\ref{sec:security-analysis}).
	
	\item \emph{Implementation and evaluation.} We evaluate \name experimentally and demonstrate that its performance overhead on file encryption and decryption is small (Section~\ref{sec:evaluation}).
	
	\item \emph{Wallet adaptation}. We show that our approach can be adapted to other use cases. As an example we describe a cryptocurrency wallet, Two-Factor Wallet (2FW), that provides strong confidentiality and high availability (Section~\ref{sec:cw}).
\end{itemize}


\section{Analysis of Commercial Solutions}
\label{sec:motivation}

Over the recent years, several \emph{personal} encrypted cloud storage services have become available. However, the security of such solutions is not apparent, because most service providers do not publish detailed security analysis that would specify the considered threats and provided security properties. 
Our first goal in this paper is to improve the understanding about the security of such solutions. To achieve this, we establish an analysis criteria for encrypted cloud storage solutions that includes precise system and threat models and desirable security properties. After that, we analyze 10 popular service based on this criteria.
We note that related \emph{corporate} cloud encryption solutions, where the users have trusted support infrastructure and personnel available, are beyond the focus of this paper.

\subsection{System Model}
\label{sec:motivation.system}

The system model for personal encrypted cloud storage consists of two main entities, user and storage provider, as illustrated in Figure~\ref{fig:system_problem_statement}.

\paragraph{User.} The user owns one or more devices, such as laptop and smartphone, and uses one of them to encrypt and upload files for cloud storage, and later download and decrypt files. Occasionally, the user needs to replace one of his devices with a new one. The user picks a password of low entropy~\cite{florencio2007large,wash2016understanding} and he may reuse passwords across different cloud services~\cite{bonneau2012science, das2014passreuse}. The user may also forget the passwords he chooses. The user does not necessarily perform offline backup of files that are stored on his devices. If the service provides convenience options, such ``remember me'' feature that saves a session token on the device, the user will enable them.

\paragraph{Storage provider.}
The storage provider runs three services. The first is standard \emph{user authentication}, e.g., based on username and password. The user's  authentication credential can be reset, e.g., using email.
The second is \emph{file storage} that allows users to upload arbitrary (encrypted) files, as well as retrieve, and delete them.
Deleted files are kept in a ``trash bin'' for some time, to allow users to change their mind and restore them. File retrieval requires user authentication.
The third service is \textit{identity verification}. In case the user gets locked out of his account (e.g., forgets his password and can no longer access his email), the user can prove his identity to the service provider through other means. Common examples of identity verification include sending a copy of their ID to the service provider or showing it over a video link. Most major online service providers have such out-of-band identity verification mechanisms in place~\cite{google-app,microsoft-recovery}. 

\begin{figure}[t]
	\centering
	\fontsize{10pt}{14pt}\selectfont
	\def\svgwidth{\linewidth}
	\import{figures/}{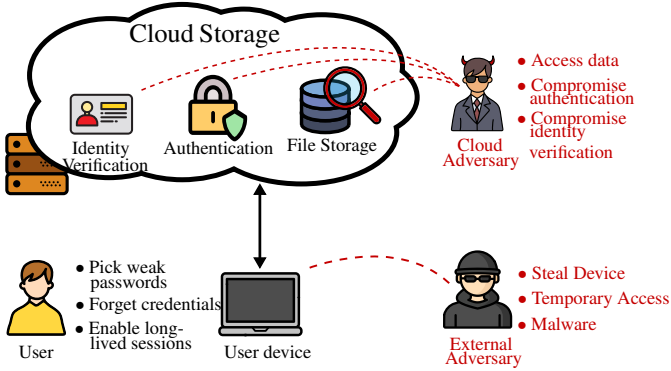}
	\caption{System model and threat model for personal encrypted cloud storage.}
	\label{fig:system_problem_statement}

\end{figure}

\subsection{Threat Model}
\label{sec:motivation.threat}

The encrypted cloud storage solution faces two types of threats, as shown in  Figure~\ref{fig:system_problem_statement}. The first set of threats is related to the cloud storage and we model them as \emph{cloud adversary}. The second set of threats is related to the user's devices and we model them as \emph{external adversary}. These two adversaries are separate entities and they do not collude.

\paragraph{Cloud adversary.} The company running the cloud storage service is considered reputable and benign. However, the company might experience an external attack, it might have a malicious insider (such as bad administrator), or it might be coerced to release any data that it stores (e.g., request by law enforcement, subpoena). To model such threats, we say that the cloud adversary has unrestrained access to the file storage (data at rest) and he can tamper with the execution of any protocol with the user, including the user authentication and identity verification. The client software distributed by the provider to end-users is benign (and can be examined by third parties).

\paragraph{External adversary.} While most commercial solutions assume an untrusted file storage on the cloud, they typically implicitly assume that the user's devices are trusted. However, in reality users often lose their devices~\cite{ponemon-airport}, and thousands of laptops are stolen or go missing daily~\cite{ponemon-billion} or get accessed by third parties when left unlocked~\cite{ponemon-billion}. Malware is also prevalent in commodity computing devices such as PCs and smartphones. We consider such threats and define that the external adversary may have the following capabilities:

\begin{itemize}
    \item \emph{Stolen device.}
The external adversary can steal the user's device, and thus access all the files on its hard drive. We assume access to any data at rest, because techniques such as full-disk encryption are not yet in prevalent use, especially outside the corporate domain.%
\footnote{Full-disk encryption is an option feature in popular consumer PC platforms like Windows 10. Further, some Windows instances enable password-less full-disk encryption only, and thus the attacker would only need to turn on the device to get the disk decrypted. Because of the high risk associated with the loss of the disk password, it is unlikely that password-protected full-disk encryption will by enabled by default for all users.}

    \item \emph{Temporary access.}
The external adversary may also gain temporary access to one of the user's devices that was left unlocked or circumvent the used screen lock protection~\cite{ponemon-billion} --- also known as ``evil maid attacks''~\cite{bellare1999non}. With temporary access, the external adversary can use the device on behalf of the user for a short amount of time (more advanced scenarios, such as cold-boot attacks~\cite{halderman2009lest} to extract all RAM content, are out of scope).

\item \emph{Malware infection.}
The external adversary may also be able to infect one of user's devices with malware. This is the most powerful scenario, where the external adversary has access to all files stored on the hard drive and all data in memory (such as open session tokens), and he learns any user input that is provided to the device (e.g., user's password).
\end{itemize}

\subsection{Security Properties}
\label{sec:analysis:properties}

The main security properties of encrypted cloud storage are data confidentiality and availability.

\paragraph{Confidentiality} means that only the owner of the file (i.e., the user) can read its content. With respect to the cloud adversary, confidentiality should hold at all times. When considering the external adversary, we are interested in the confidentiality of files that were uploaded to the cloud storage \textit{before} the device was compromised---clearly, files on the stolen or compromised device cannot have any confidentiality guarantee.

\paragraph{Availability} means that the user can maintain access to files stored on the cloud at all times. Availability with respect to the cloud provider is out of scope: a malicious, compromised, or coerced provider can deny service to the users or delete all users' files. In practice, this is not a major threat, though, because a commercial service provider would have consequences if it loses all users' data.
Similar to above, we are interested in the availability of files that were stored on the cloud \emph{before} the adversary compromised the user's device. Obviously files uploaded during malware infection cannot guarantee availability, as the external adversary could, for example, replace them with arbitrary data.

\paragraph{Non-requirements.} We do \emph{not} consider more advanced confidentiality requirements, such as hiding users file access patterns or confidentiality of encrypted file sizes. Some of such requirements can be achieved through orthogonal security techniques such as private information retrieval~\cite{goldreich1996software} and meta-data hiding encryption~\cite{nikitin2019purbs}.
Other advanced cloud storage properties, such as  proof-of-possession~\cite{ateniese2007provable} and deduplication for encrypted data~\cite{storer2008secure, Liu:CCS15, Bellare:Sec13} are also out of scope.

\subsection{Analysis Method}
\label{sec:analysis:testing}

We examined 10 of the most popular encrypted cloud storage services available on the market which are listed in Table~\ref{tab:comparison}. 
As the starting point our analysis, we took any available white-paper.
If the application supports a web interface, then we use the in-browser debugger to examine what data is transmitted to the service.
If the application in addition supports a desktop application, we install it and examine the local configuration files that the application uses and its operational pattern.
For example, if the application starts automatically after a system reboot, it means the user's authentication credentials and keys must be stored in the clear somewhere on the user's device.
When a particular feature is not mentioned by the provider (e.g., not in the white-paper or online documentation, and we do not find any evidence of its implementation from our manual inspection), we assume the  service in question does not support it.


{
\rowcolors{1}{white}{lightlightgray}
\def\arraystretch{1.1} 
\begin{table}[t]
	\footnotesize
	\caption{Comparison of encrypted cloud storage solutions. We mark \pyes~for fully provided, \psome~for partially-provided, and \pno~for not supported security property.}
	\label{tab:comparison}
	\begin{tabularx}{\linewidth}{lXXXXXXXX}
		\toprule\hiderowcolors 
		& \multicolumn{4}{c}{Confidentiality} & \multicolumn{4}{c}{Availability} \\
		\cmidrule(rl){2-5} \cmidrule(rl){6-9}
		& \myrot{Cloud\\Provider} & \myrot{Stolen\\Device} & \myrot{Temporary\\Access} & \myrot{Malware} & \myrot{Stolen\\Device} & \myrot{Temporary\\Access} & \myrot{Malware} & \myrot{Forgotten\\Password} \\
		
		\midrule\showrowcolors
		
		\multicolumn{9}{l}{\scriptsize{\textit{Encryption using pseudorandom key}}} \\
		Tarsnap~\cite{tarsnap} & \pyes & \pno & \pno & \pno & \pno & \pno & \pno & \pyes \\
		
		\midrule \midrule
		
		\multicolumn{9}{l}{\scriptsize{\textit{Encryption using password}}} \\
		
		Mega~\cite{mega-security} & \pno & \pno & \pno & \pno & \pno & \pno & \pno & \pno \\
		
		pCloud~\cite{pcloud-encrypted-cloud-storage} & \pno & \pyes & \pno & \pno & \pyes & \pno & \pno & \pno \\
		
		Sync~\cite{sync-your-privacy} &\pno & \pno & \pno & \pno & \psome & \psome & \psome & \psome \\
		
		Woelkli~\cite{woekli} & \pno  & \pno & \pno & \pno & \pno & \pno & \pno & \psome \\
		
		SpiderOak~\cite{spideroak-no-knowledge} &  \pno & \pno & \pno & \pno & \pno & \pno & \pno & \pno \\
		
		Tresorit~\cite{tresorit-security} & \pno & \pno & \pno & \pno & \pno & \pno & \pno & \pno \\
		
		Boxcryptor~\cite{boxcryptor} & \pno & \pno & \pno & \pno & \psome & \psome & \psome & \psome \\
		
		Zoolz~\cite{zoolz} & \pno & \pno & \pno & \pno & \pno & \pno & \pno & \pno \\
				
		IDrive~\cite{idrive} & \pno & \pno & \pno & \pno & \pno & \pno & \pno & \pno \\
		
		\midrule \midrule
		
		\multicolumn{9}{l}{\scriptsize{\textit{Two-factor solutions}}} \\
		
		Key duplication & \pyes & \pno & \pno & \pno & \pno & \pno & \pno & \pyes \\
		
		Simple secret sharing & \pyes & \pyes & \pno & \pno & \pno & \pno & \pno & \pyes \\		
		
		\textbf{2FE} & \pyes & \pyes & \psome & \psome & \pyes & \pyes & \psome & \pyes \\
		
		\textbf{2FE with prompts} & \pyes & \pyes & \pyes & \pyes & \pyes & \pyes & \psome & \pyes \\
		
		\bottomrule
		
	\end{tabularx}\vspace{-1em}
\end{table}
}

\subsection{Analysis Findings}
\label{sec:analysis:findings}

The analyzed commercial services fall into two main categories: (i)~solutions that encrypt files using a \emph{full-length pseudorandom key}; and (ii) systems that derive the file encryption key from a \emph{user-chosen password}. Below we explain the main security guarantees and weaknesses of both of these categories.  Appendix~\ref{sec:analysis-remarks} provides further remarks on the analyzed services.

\paragraph{Encryption using pseudorandom key.} Out of the 10 analyzed solutions, only one, Tarsnap~\cite{tarsnap}, leverages a full-length pseudorandom key for encryption. 
As shown in  Table~\ref{tab:comparison}, this approach provides strong confidentiality protection against the cloud adversary. Unfortunately, there are no confidentiality guarantees against the external adversary. If the external adversary manages to steal the user's device, access it temporarily, or infect it with malware, he can read all user's files. In such cases, ciphertext retrieval is possible for the adversary using an open session token saved on the device. 

This approach also suffers from poor availability. All encrypted files become irrecoverable if the user's device is lost or stolen, as the user loses the only copy of the key. The same holds for temporary access and malware: while directly deleting files can be mitigated with the help of a server-side ``trash bin'', the adversary can simply delete the key from the device. Data availability is guaranteed if the user forgets his password that is used as authentication credential, as standard password reset mechanism can be used.

\paragraph{Encryption using password.} Most services (9 out of 10) protect encrypted files using a user-chosen password, as show in Table~\ref{tab:comparison}. Such solutions can be further classified in to two groups. In the first group we have solutions, such as Mega~\cite{mega-security} and pCloud~\cite{pcloud-encrypted-cloud-storage}, where the user-chosen password is used as a seed to generate per-file encryption keys. The second group consists of systems like Sync~\cite{sync-your-privacy}, SpiderOak~\cite{spideroak-no-knowledge} and Tresorit~\cite{tresorit-security}, where files stored on the cloud are encrypted with a full-length pseudorandom key that, in turn, is encrypted with the user's password. The encrypted keys are uploaded on the cloud storage as well.

The main shortcoming of all password-based systems is lack of confidentiality towards the cloud adversary that has direct access to all the ciphertexts. When a user-chosen password is used for online authentication, the server can easily limit the number of password guesses. But when the same password is used for encryption, and the adversary has access to the ciphertext, breaking the encryption through brute-force attacks becomes often possible. An average password provides no more than 20 bits of security against offline guessing~\cite{bonneau2012science, florencio2007large,wash2016understanding}.
This applies even when password composition policies are in place~\cite{komanduri2011passwords}. 

For user convenience, most of the analyzed solutions provide long-lived sessions that last across device reboots. Such a design choice requires storing either authentication credentials or active session token on the hard drive of the user's device. Thus, if the external adversary steals the user's device, he may use the authentication credential or the session token to retrieve ciphertexts and then break the encryption, as discussed above or based on the saved password. Thus such solutions do not provide confidentiality against device theft or malware infection. Also in the case of temporary access, the external adversary can read the user's files.

The main benefit of password-based schemes is that they could, in principle, provide good availability against the external adversary. If the external adversary gains temporary access or manages to infect the user's device with malware, the adversary can send file deletion commands to the cloud storage, but all the deleted files could be kept in the trash bin. In the case of device theft by external adversary, users could still access all their files with another device.

Unfortunately, our analysis finds that most of the commercial solutions do not, in practice, provide such protection. Some of the analyzed solutions~\cite{mega-security,woekli,zoolz} provide simple ways for users to permanently erase files from the trash bin, thus voiding this benefit.
Further, in most of the analyzed services~\cite{pcloud-encrypted-cloud-storage,woekli,spideroak-no-knowledge,tresorit-security,idrive} the external adversary, who uses an active session, is allowed to change the user's password. These services typically implement password change by having the client re-encrypt a file encryption key (that is part of the session token) with the new password and then deleting the previous copy of the file encryption key that was encrypted with the old password. Such implementation effectively locks out the legitimate user, without any possibility of recovering access to his files. Only two of the analyzed services, Sync~\cite{sync-your-privacy} and Boxcryptor~\cite{boxcryptor},  implement file versioning and password changes such that they can offer some degree of availability against the external adversary. We discuss this topic further in Appendix~\ref{sec:analysis-remarks}.

Another major problem of password-based encryption schemes is that they do not tolerate forgotten passwords. When passwords are used as an authentication mechanism, resetting them is possible. However, when passwords are used for encryption, a forgotten password means that the user will lose access to all their files.
To address this concern, some of the analyzed solutions like Sync~\cite{sync-your-privacy} and SpiderOak~\cite{spideroak-no-knowledge} provide recovery mechanisms like password hints. However, such ``helpers'' also help the could  adversary to access the encrypted files. Other services provide email-based password recovery options which in practice mean that, if such features are enabled, there is no confidentiality towards the cloud adversary at all (see Appendix~\ref{sec:analysis-remarks} for details).

\paragraph{Summary.}
We conclude that none of the analyzed solutions provides sufficient confidentiality and availability with respect to our threat model (see Table~\ref{tab:comparison}). Encryption using pseudorandom keys is secure against the cloud adversary, but it provides no confidentiality towards the external adversary. Password-based solutions provide poor confidentiality against both adversaries. Additionally, availability of such solutions is poor, as access to all files is lost in case of a forgotten password.


\section{Two-factor Encryption}
\label{sec:2fe_overview}

Motivated by this situation we propose a novel solution called \emph{Two-Factor Encryption} (2FE). Our main goal is to address the previously discussed limitations of currently available commercial services and provide a solution that is secure with respect to our extensive but realistic threat model.

\paragraph{Assumptions.} When designing our solution, we consider the system model defined in Section~\ref{sec:motivation}. The only difference is that we now assume that the user has \emph{two} devices, like a laptop and a smartphone. One of the devices, say the laptop, is called \emph{primary} and it is used to access files from the cloud. The other device, for example the smartphone, is called \emph{secondary} and its purpose is to assist the primary in file encryption and decryption. This setting is comparable to the widely adopted concept of two-factor authentication (2FA), where the primary device needs to log in to an online service, and the secondary device assists the primary in this task.

We consider the threat model defined in Section~\ref{sec:motivation} with one small refinement. We assume that the external adversary can compromise \emph{only one} of the user's devices at a time. That is, at no point in time are both user's devices controlled by the external adversary simultaneously.

\subsection{Main Approach}
\label{sec:approach}

Our analysis showed that password-based encryption cannot provide sufficient confidentiality against the cloud adversary. Therefore, we focus on solutions where user's files are encrypted using full-length pseudorandom keys. Additionally, we learned that solutions that store the entire encryption key on the user's device, when combined with convenience features such as long-lived sessions, do not provide confidentiality against the external adversary who steals the user's device. Thus, we rule out possible solutions that duplicate encryption keys to both devices. 

For these reasons, we take the well-known cryptographic technique of \emph{secret sharing} as the starting point of our solution. Our main approach is to use 2-out-of-2 secret sharing scheme and split the  encryption secret $\PRFkey$ into two shares, $\keyC$ and $\keyD$, that are stored on the primary and secondary device, respectively. Using such an approach it becomes possible to build a solution that provides strong confidentiality towards the cloud adversary and the external adversary at the same time.

\subsection{Challenges and Techniques}
\label{sec:challenges}

Just leveraging simple secret sharing is insufficient for secure encrypted cloud storage. In this section, we explain the problems of such simple solution and then describe techniques that we use to address them. By doing this, we gradually introduce the main technical ingredients of our solution and build it piece by piece. The full solution is described in Section~\ref{sec:full-protocol}.

\paragraph{Key recovery.} Simple 2-out-of-2 secret sharing has the following data availability problem. If one of the user's devices is lost or stolen by the external adversary, re-construction of the full encryption secret $\PRFkey$ becomes impossible, and the user will permanently loose access to all his files.

To address this problem we use the untrusted cloud storage as a \emph{virtual} third device. If one of the user's devices is lost or stolen, the user can recover the needed key material from the cloud storage and the other device. More precisely, we let the primary compute a 2-out-of-2 secret sharing $\keyC^{\S}$, $\keyC^{\D}$ of its key share $\keyC$ and give $\keyC^{\S}$ to the storage server and $\keyC^{\D}$ to the secondary device. Thus, if the primary device is lost, the remaining devices (the one left in the user's control plus the cloud) can recover the lost share and thus access to the user's files. The secondary device performs analogous key sharing.

\paragraph{Malicious recovery.} The above technique addresses the availability concern, but it raises a new confidentiality problem. If the external adversary gains access to one of the two devices, what prevents him from recovering the needed key shares from the cloud storage and thus accessing all user's files?

We handle this problem by requiring that the user must securely authorize every recovery operation. 
This is possible because most online service providers support some form of (out-of-band) identity verification process, as explained in Section~\ref{sec:motivation.system}. If the user loses one of his devices permanently, he can complete the identity verification process with the cloud storage. During the identity verification process the user binds the identity of the new replacement device to the recovery process, which allows the cloud storage to release the key material (e.g., $\keyC^{\S}$) to the correct replacement device. The identity verification may be a somewhat inconvenient operation, but it is performed infrequently, i.e., only when the user looses a device permanently.

\paragraph{Key share refresh.} The simple secret-sharing approach has another confidentiality problem. If the external adversary compromises one device first, and then the other device later, the adversary will \emph{eventually} learn enough information (both key shares $\keyC$ and $\keyD$) that allow him to reconstruct the full encryption secret $\PRFkey$, even though the adversary never controls both devices simultaneously.

We handle this problem with a key refresh mechanism that invalidates old key shares and computes new key shares after every device recovery operation.
To enable key refresh, we need to preserve the invariant $\keyC+\keyD=\PRFkey$ which can be achieved as follows: After recovery operation, we refresh the shares $\keyC'=\keyC+v$ and $\keyD'=\keyD-v$ for a random $v$ and delete the old shares from the user's devices. Note that $(\keyC+v)+(\keyD-v)=\PRFkey+0=\PRFkey$, so the invariant is maintained.
However, if the adversary steals $\keyC$ and then $\keyD'=\keyD-v$, he will not learn anything about $\PRFkey$ as $v$ is unknown.

\paragraph{Secure key derivation.} Our next problem is related per-file encryption key derivation, in cases where one of the devices is compromised by the external adversary. We wish to derive a per-file encryption key $k$ as $k=F(\PRFkey,s)$ where $F$ is a pseudo-random function (PRF) and $s$ some fresh randomness that ensures two files get different keys. The key derivation process should be such that $\keyC$ is not revealed to the secondary device (or $\keyD$ to the primary). Thus, we cannot simply send $\keyD$ to the primary device and let it compute $F$ on its own as that would reveal $\PRFkey$.

To address this challenge, we evaluate $F$ in an ``oblivious'' fashion.
An \emph{Oblivious PRF} (OPRF) \cite{10.1007:978-3-540-30576-7_17} is a protocol where one party holds a key $\PRFkey$ while the other holds an input $x$.
The output is the value $z=F(\PRFkey,x)$ which is given to the party who holds $x$. Such functionality is almost what we need.
Our solution needs a slightly modified oblivious protocol that goes as follows: The primary device has input $(\keyC,s)$ while the secondary device has input $(\keyD,s)$. The output $k=F({\keyC+\keyD},s)=F(\PRFkey,s)$ is given to the primary device, without revealing anything about $\keyC$ to the secondary device (or $\keyD$ to the primary device).

\paragraph{Detection of bad keys.}
File decryption only works if both the primary and the secondary device input their correct key shares in the above outlined per-file encryption key derivation process. 
If one of the devices, say the secondary, is malicious and inputs a false ``key share'' $\keyD^{*}$, the resulting file encryption key is such that it cannot be later recovered, and the availability of the encrypted file is lost.

To prevent this, we integrate simple zero-knowledge proofs to the above outlined oblivious PRF protocol that allow the primary to detect misbehaving secondary. Such functionality can be achieved by making small adjustments to the construction described in \cite{DBLP:conf/asiacrypt/JareckiKK14}.

\paragraph{Malicious primary.}
Finally, we need to consider the worst possible case for any two-factor solution---one where the \emph{primary} is malicious because the external adversary has infected it with malware or has gained temporary access to it. In such cases, the adversary will be able to request the decryption of any file they wish.

To address such cases, our solution can be complemented with notifications or prompts on the secondary device, if the user wants to enable such protections. When the primary device requests a decryption, the secondary device can either notify the user that a decryption is being made or ask the user to approve the decryption. As too frequent notifications and prompts can lead to negative effects such as user habituation, we discuss possible ways to reduce them in Section \ref{sec:reduced-prompts}.

\subsection{Functionalities}
\label{sec:functionalities}

Before describing our full solution, we define one \emph{primitive} (secret sharing) and two \emph{functionalities} (shared randomness and threshold PRF) that we, for now, assume are securely available. This approach simplifies both the presentation of our full solution (next section) and its security analysis (Section~\ref{sec:security-analysis}). In Appendix \ref{sec:2fe} we show how the assumed functionalities can be instantiated securely and efficiently such that they work together with secret sharing primitive.

\paragraph{Secret sharing.} The main technical primitive that our solution needs is secret sharing. Let $K$ be a field. A secret-sharing of $v\in K$ is a vector $(v_{0},v_{1})$ of elements picked uniformly at random from $K$ such that $v_{0}+v_{1}=v$.
Note that no information about $v$ is leaked given only $v_{0}$ or $v_{1}$; on the other hand, $v$ is uniquely determined by $(v_{0},v_{1})$.
In our solution, we will make use of the following two properties:

\begin{enumerate}
\item If each device selects a ``share'' at random themselves---call these $\keyC$ and $\keyD$---then these together uniquely define some \emph{unknown} element $\PRFkey$.
\item If $(v_{0},v_{1})$ is a secret-sharing of $v$, and $(u_{0},u_{1})$ a secret-sharing of $u$, then $(v_{0}+u_{0},v_{1}+u_{1})$ is a secret sharing of $v+u$.
\end{enumerate}

The first property allows us to set up the system without ever having to  reveal the full secret $\PRFkey$ to any of the user's devices.
The second allows us to refresh the key, by adding a secret-sharing of 0. In Appendix \ref{sec:2fe} we construct such secret sharing.

\paragraph{Shared randomness.}
The first functionality that our solution needs is one that allows two parties to generate a random value that cannot be biased. This functionality waits to receive an $\mathtt{init}$ message from both devices, and once received, outputs a uniformly random string to everyone, as shown below in Functionality \ref{func:cointoss}.

\begin{functionality}{func:cointoss}{$\SR(\lambda)$ --- Shared Randomness}
    \begin{itemize}
    \item On $\cmd{init}$ from both devices, sample $s\pick\bits^{\lambda}$ at random and output $\cmda{seed}{s}$ to both devices.
    \end{itemize}
\end{functionality}

Instantiating the above $\SR$ protocol is straightforward using commitments; a concrete instantiation and its security argument are provided in Appendix \ref{sec:2fe}.

\paragraph{Threshold PRF.}
The second functionality that our solution needs is one that allows two devices to evaluate a pseudo-random function to obtain a key suitable for encryption.
More precisely, this functionality will evaluate $F(\keyC+\keyD,x)$ based on inputs $x$ and $\keyC$ from the primary and input $\keyD$ from the secondary without revealing $\keyD$ to the primary or $\keyC$ to the secondary.
Further, we require that this functionality can detect whether the secondary device inputs an incorrect key.
This is done by having the primary input a ``public key'' $\pk$, and checking if $\pk=\mathsf{Gen}(\keyD)$ where $\keyD$ is the key input by the secondary, and $\mathsf{Gen}$ is a deterministic key generation procedure.
This protocol is described below in Functionality \ref{func:tprf}.

\begin{functionality}{func:tprf}{$\tPRF(\lambda)$ --- Threshold PRF}
  Let $F:K\times\bits^{\ell}\to\bits^{\lambda}$ be a PRF with keyspace $K$.
  \begin{itemize}
    \item On $\cmda{init}{x,\keyC,\pk}$ from the primary device, store $(x,\pk)$ and wait for a message from the secondary device.
    \item On $\cmda{init}{x',\keyD}$ from the secondary device, then:
      \begin{itemize}
      \item If $x\neq x'$ or $\pk\neq\mathsf{Gen}(\keyD)$ set $k=\bot$.
      \item Otherwise, pick at random $k=F(\keyC+\keyD,x)$.
      \end{itemize}
    \item Send $\cmda{key}{k}$ to the primary device.
    \end{itemize}
\end{functionality}

In Appendix \ref{sec:2fe} we show how to instantiate such $\tPRF$ from the oblivious PRF protocol in \cite{DBLP:conf/asiacrypt/JareckiKK14} and provide a security argument for our construction.

\subsection{Full Solution}
\label{sec:full-protocol}

We are now ready to present our solution in detail.

\paragraph{Session management.}
We manage sessions in a straightforward way.
During system initialization, the user's two devices perform a pairing after which the user authenticates towards the storage server using using an authentication credential such as password.
The user can enable convenience features such as long-lived sessions that allow  encryption and decryption of files, without having to perform authentication for every operation separately. One device can be used to invalidate the session of the other device (e.g., in case of lost or stolen device).

\paragraph{Enrollment.}
Figure~\ref{fig:twofe-init} shows how devices are enrolled. In step 1, each of the user's devices generate their shares of the master encryption key, as well as a secret sharing of their key-shares.
That is, each device picks a random key, e.g., $\keyC$ for the primary and computes as well a sharing $(\keyC^{\D},\keyC^{\S})$ of $\keyC$.
Step 2 and 3 are executed in parallel and involves the devices sending the shares of their encryption keys to the other device and the storage server.
In step 2, the secondary device additionally sends the public key of its key share $\pk=\mathsf{Gen}(\keyD)$ to the primary, which the latter will use as input to the threshold PRF functionality.

\paragraph{Encryption.}
Figure~\ref{fig:twofe-enc} shows how \name encrypts a file $m$.
Given $m$, the primary device generates a tag $t$ which will be used to later identify the ciphertext during decryption (step 2).
The only requirements we put on $t$ is to be unique (in particular, the system will use $t$ to identify files) and that $t$ does not reveal information about files.
One way to handle these tags would be for each device to store a mapping $\mathtt{filename}\to t$ and then pick $t$ at random. This mapping could be stored under a static key on the cloud as well (i.e., a key which does not require two factor decryption).

Next, in step 3, the primary and secondary invoke the shared randomness $\SR$ functionality to generate a fresh encryption seed.
The primary device then sends $t$ to the secondary device (step 4) and the two devices then invoke the $\tPRF$ functionality to derive an encryption key (step 5).
Finally, using $k$, the primary device can encrypt $m$ (step 6) and then upload the ciphertext, as well as $t$ and $s$ which are needed for decryption, to the cloud (step 7).

\begin{figure}[t]
  \centering
  \subfloat {
    \scalebox{0.73}{
      \tikzset{
	partial ellipse/.style args={#1:#2:#3}{
		insert path={+ (#1:#3) arc (#1:#2:#3)}
	}
}

\tikzset{
	ncbar angle/.initial=90,
	ncbar/.style={
		to path=(\tikztostart)
		-- ($(\tikztostart)!#1!\pgfkeysvalueof{/tikz/ncbar angle}:(\tikztotarget)$)
		-- ($(\tikztotarget)!($(\tikztostart)!#1!\pgfkeysvalueof{/tikz/ncbar angle}:(\tikztotarget)$)!\pgfkeysvalueof{/tikz/ncbar angle}:(\tikztostart)$)
		-- (\tikztotarget)
	},
	ncbar/.default=0.5cm,
}

\begin{tikzpicture}[every node/.style={transform shape},apply style/.code={\tikzset{#1}},>=stealth']
\def\hordist{3cm}
\def\hordistw{3.5cm}
\def\slope{0.3}
\def\spacing{0}

\node[icon] (user) {\pgfuseimage{user}};
\node[txt,above=0cm of user] (t1) {User};
\node[right=\hordist of user.center,anchor=center,icon] (device) {\pgfuseimage{device}};
\node[txt] at (device |- t1) {Secondary};
\node[right=\hordistw of device.center, anchor=center,icon] (client) {\pgfuseimage{client}} {};
\node[txt] (client-text) at (client |- t1) {Primary};
\node[right=\hordistw of client.center, anchor=center,icon] (cloud) {\pgfuseimage{cloud}};
\node[txt] at (cloud |- t1) {Cloud};

\def\length{5.2}

\draw[-] ($(user)-(0, .8)$) -- ($(user)-(0,\length)$);
\draw[-] ($(device)-(0, .8)$) -- ($(device)-(0,\length)$);
\draw[-] ($(client)-(0, .8)$) -- ($(client)-(0,\length)$);
\draw[-] ($(cloud)-(0, .8)$) -- ($(cloud)-(0,\length)$);

\node at ($(device)-(-.2,1.2)$) {1};
\node at ($(client)-(.2,1.2)$) {1};
\node at ($(client)-(.2,3)$) {2};
\node at ($(device)-(.2,4)$) {3};

\def\baseline{1.2}
\def\slope{0.2}

\draw[->,color=black!100] ($(client) - (0,\baseline)$) to [out=0,in=0,looseness=2] node[right,text width=\hordist*.6,align=center]{Generate\\$\keyC,\keyC^{\S},\keyC^{\D}$} ($(client) - (0,\baseline+1)$);

\draw[->,color=black!100] ($(device) - (0,\baseline)$) to [out=180,in=180,looseness=2] node[left,text width=\hordist*.7,align=center]{Generate\\$\keyD,\keyD^{\S},\keyD^{\C},\pk$} ($(device) - (0,\baseline+1)$);

\draw[->,color=black!100] ($(client) - (0,\baseline+1.6)$) -- node[above,align=center] {$\keyC^{\D}$} ($(device) - (0,\baseline+1.8)$);
\draw[->,color=black!100] ($(client) - (0,\baseline+1.6)$) -- node[above,align=center] {$\keyC^{\S}$} ($(cloud) - (0,\baseline+1.8)$);

\draw[->,color=black!100] ($(device)-(0,\baseline+2.5)$) -- node[above,align=center] {$\keyD^{\C},\pk$} ($(client)-(0,\baseline+2.9)$);

\draw[->,color=black!100] ($(device)-(0,\baseline+3)$) -- node[above,align=center,pos=0.75] {$\keyD^{\S}$} ($(cloud)-(0,\baseline+3.8)$);

\end{tikzpicture}
    }
  }
  \caption{System enrollment.}
  \label{fig:twofe-init}
\end{figure}

\begin{figure}[tb]
  \centering
  \subfloat {
    \scalebox{0.73}{
      \tikzset{
	partial ellipse/.style args={#1:#2:#3}{
		insert path={+ (#1:#3) arc (#1:#2:#3)}
	}
}

\tikzset{
	ncbar angle/.initial=90,
	ncbar/.style={
		to path=(\tikztostart)
		-- ($(\tikztostart)!#1!\pgfkeysvalueof{/tikz/ncbar angle}:(\tikztotarget)$)
		-- ($(\tikztotarget)!($(\tikztostart)!#1!\pgfkeysvalueof{/tikz/ncbar angle}:(\tikztotarget)$)!\pgfkeysvalueof{/tikz/ncbar angle}:(\tikztostart)$)
		-- (\tikztotarget)
	},
	ncbar/.default=0.5cm,
}

\begin{tikzpicture}[every node/.style={transform shape},apply style/.code={\tikzset{#1}},>=stealth']
\def\hordist{3cm}
\def\hordistw{3.5cm}
\def\slope{0.3}
\def\spacing{0}

\node[icon] (user) {\pgfuseimage{user}};
\node[txt,above=0cm of user] (t1) {User};
\node[right=\hordist of user.center,anchor=center,icon] (device) {\pgfuseimage{device}};
\node[txt] at (device |- t1) {Secondary};
\node[right=\hordistw of device.center, anchor=center,icon] (client) {\pgfuseimage{client}} {};
\node[txt] (client-text) at (client |- t1) {Primary};
\node[right=\hordistw of client.center, anchor=center,icon] (cloud) {\pgfuseimage{cloud}};
\node[txt] at (cloud |- t1) {Cloud};

\def\length{5}

\draw[-] ($(user)-(0, .8)$) -- ($(user)-(0,\length)$);
\draw[-] ($(device)-(0, 1.2)$) -- ($(device)-(0,\length)$);
\draw[-] ($(client)-(0, .8)$) -- ($(client)-(0,\length)$);
\draw[-] ($(cloud)-(0, .8)$) -- ($(cloud)-(0,\length)$);

\node at ($(user)-(-.2,1.2)$) {1};
\node at ($(client)-(.2,1.5)$) {2};
\node at ($(device)-(.2,2)$) {3};
\node at ($(client)-(-.2,2.6)$) {4};
\node at ($(device)-(.2,3.4)$) {5};
\node at ($(client)-(-.2,4.1)$) {6};
\node at ($(client)-(-.2,4.9)$) {7};

\def\baseline{1}
\def\slope{0.2}

\draw[->,color=black!100] ($(user) - (0,\baseline)$) -- node[above,align=center,pos=.75] {$m$} ($(client) - (0,\baseline)$);

\draw[->,color=black!100] ($(client) - (0, \baseline+0.2)$) to [out=0,in=0,looseness=2] node[right,text width=\hordist*.7,align=center,fill=white!0] {generate $t$} ($(client) - (0,\baseline+0.8)$);

\draw[<->,color=black!100] ($(client) - (0,\baseline+1)$) -- node[above,align=center] {$s\leftarrow\SR\rightarrow s$} ($(device) - (0,\baseline+1)$);

\draw[->,color=black!100] ($(client) - (0,\baseline+1.6)$) -- node[above,align=center] {$t$} ($(device) - (0,\baseline+1.8)$);

\draw[<->,color=black!100] ($(client) - (0,\baseline+2.4)$) -- node[above,align=center] {$k\leftarrow\mathsf{tPRF}$} ($(device) - (0,\baseline+2.4)$);

\draw[->,color=black!100] ($(client) - (0, \baseline+2.8)$) to [out=180,in=180,looseness=2]
node[left,text width=\hordist*.7,align=center,fill=white!0] {$c:=\Enc(k,m)$} ($(client) - (0,\baseline+2.8+.6)$);

\draw[->,color=black!100] ($(client) - (0,\baseline+3.6)$) -- node[above,align=center] {$c$, $t$, $s$} ($(cloud) - (0,\baseline+3.8)$);




\end{tikzpicture}
    }
  }
  \caption{File encryption.}
  \label{fig:twofe-enc}
\end{figure}

\paragraph{Decryption.}
Figure~\ref{fig:twofe-dec} presents file decryption.
In step 1, the user identifies which file they want to decrypt by inputting $t$ to the primary device.
In step 2 and 3, the primary device retrieves the ciphertext $c$ and the encryption seed $s$ by sending $t$ to the storage server.
The tag and seed are then forwarded to the secondary device.
Depending on its configuration, \name can transparently accept the decryption request, or, if enabled by the user, show a decryption notification to the user or ask the user to confirm the decryption of the specified file and tag (steps 5 and 6).
After the operation is approved, in step 7, both devices invoke the $\tPRF$ functionality to recover the decryption key.
Finally, in step 8, the primary device decrypts $m$, and returns it to the user in step 9.

\begin{figure}[t]
  \centering
  \subfloat {
    \scalebox{0.73}{
      \tikzset{
	partial ellipse/.style args={#1:#2:#3}{
		insert path={+ (#1:#3) arc (#1:#2:#3)}
	}
}

\tikzset{
	ncbar angle/.initial=90,
	ncbar/.style={
		to path=(\tikztostart)
		-- ($(\tikztostart)!#1!\pgfkeysvalueof{/tikz/ncbar angle}:(\tikztotarget)$)
		-- ($(\tikztotarget)!($(\tikztostart)!#1!\pgfkeysvalueof{/tikz/ncbar angle}:(\tikztotarget)$)!\pgfkeysvalueof{/tikz/ncbar angle}:(\tikztostart)$)
		-- (\tikztotarget)
	},
	ncbar/.default=0.5cm,
}

\begin{tikzpicture}[every node/.style={transform shape},apply style/.code={\tikzset{#1}},>=stealth']
\def\hordist{3cm}
\def\hordistw{3.5cm}
\def\slope{0.3}
\def\spacing{0}

\node[icon] (user) {\pgfuseimage{user}};
\node[txt,above=0cm of user] (t1) {User};
\node[right=\hordist of user.center,anchor=center,icon] (device) {\pgfuseimage{device}};
\node[txt] at (device |- t1) {Secondary};
\node[right=\hordistw of device.center, anchor=center,icon] (client) {\pgfuseimage{client}} {};
\node[txt] (client-text) at (client |- t1) {Primary};
\node[right=\hordistw of client.center, anchor=center,icon] (cloud) {\pgfuseimage{cloud}};
\node[txt] at (cloud |- t1) {Cloud};

\def\length{5}

\draw[-] ($(user)-(0, .8)$) -- ($(user)-(0,\length)$);
\draw[-] ($(device)-(0, 1.2)$) -- ($(device)-(0,\length-0.4)$);
\draw[-] ($(client)-(0, .8)$) -- ($(client)-(0,\length)$);
\draw[-] ($(cloud)-(0, .8)$) -- ($(cloud)-(0,\length)$);

\def\baseline{1}
\def\slope{0.2}

\node at ($(user)-(-.2,1.2)$) {1};
\node at ($(client)-(-.2,1)$) {2};
\node at ($(cloud)-(.2,2.1)$) {3};
\node at ($(client)-(.2,2.4)$) {4};
\node at ($(device)-(-.2,2.8)$) {5};
\node at ($(user)-(.2,3.2)$) {6};
\node at ($(client)-(.2,3.3)$) {7};
\node at ($(client)-(-.2,4.1)$) {8};
\node at ($(client)-(-.2,4.8)$) {9};

\draw[->,color=black!100] ($(user) - (0,\baseline)$) -- node[above,align=center,pos=.75] {$t$} ($(client) - (0,\baseline)$);

\draw[->,color=black!100] ($(client) - (0,\baseline+0.2)$) -- node[above,align=center] {$t$} ($(cloud) - (0,\baseline+.4)$);
\draw[->,color=black!100] ($(cloud) - (0,\baseline+0.8)$) -- node[above,align=center] {$c$, $s$} ($(client) - (0,\baseline+1)$);

\draw[->,color=black!100] ($(client) - (0,\baseline+1.2)$) -- node[above,align=center] {$t, s$} ($(device) - (0,\baseline+1.4)$);

\draw[->,color=gray!100] ($(device) - (0,\baseline+1.6)$) -- node[above,align=center] {show notification or \\prompt user} ($(user) - (0,\baseline+1.8)$);
\draw[->,color=gray!100] ($(user) - (0,\baseline+2.2)$) -- node[above,align=center] {approve request} ($(device) - (0,\baseline+2.4)$);

\draw[<->,color=black!100] ($(client) - (0,\baseline+2.6)$) -- node[above,align=center] {$k\leftarrow\mathsf{tPRF}$} ($(device) - (0,\baseline+2.6)$);

\draw[->,color=black!100] ($(client) - (0, \baseline+2.8)$) to [out=180,in=180,looseness=2]
node[left,text width=\hordist*.7,align=center,fill=white!0] {$m:=\Dec(k,c)$} ($(client) - (0,\baseline+2.8+.6)$);

\draw[->,color=black!100] ($(client) - (0,\baseline+3.8)$) -- node[above,align=center,pos=.75] {$m$} ($(user) - (0,\baseline+3.8)$);



\end{tikzpicture}
    }
  }
  \caption{File decryption (notification/prompt optional).}
  \label{fig:twofe-dec}
\end{figure}

\paragraph{Migration.}
Figure~\ref{fig:twofe-rec2} shows migration of the secondary device (primary device migration is analogous).
In step 1, the primary device sends a recover request to the cloud.
The cloud then sends an authorization message to the user's \emph{old} device, which then prompts the user for authorization (steps 3 and 4).
This step ensures that the user actually wishes to replace the old device, as well as gives the user a chance to input binding information about the new device.
In step 5, the authorization is sent back to the server which then (using information in the authorization request) sends its key share to the user's \emph{new} device (step 6). Finally, the primary device sends its key share as well (step 7) and the primary and new secondary device update their keys (step 8), which works as follows:
Primary device generates a secret-sharing $(v_{\C},v_{\D})$ of $0$, sets $\keyC:=\keyC+v_{\C}$ and sends $v_{\D}$ to the secondary, who sets $\keyD:=\keyD+v_{\D}$.
Next, both devices act as in system enrollment (Figure \ref{fig:twofe-init}), except that no new $\keyC$ and $\keyD$ are generated.

\paragraph{Recovery.}
Figure~\ref{fig:twofe-rec} shows device recovery for cases where the old device is lost or stolen. Similar to the above describe device migration, the process starts with a recover request from the primary to the cloud (step 1).
When the user's old device does not respond (in step 2), cloud falls back to an out-of-band identity verification (step 3).
If the identity verification is successful, recovery proceeds as before (using device binding information provided during the identity verification):
the cloud sends its share to the new device in step 4, using information in the authentication response.
The primary device sends its share (step 5) and the two new devices update their keys as explained above (step 6).

\begin{figure}[t]
  \centering
  \subfloat {
    \scalebox{0.73}{
      \tikzset{
	partial ellipse/.style args={#1:#2:#3}{
		insert path={+ (#1:#3) arc (#1:#2:#3)}
	}
}

\tikzset{
	ncbar angle/.initial=90,
	ncbar/.style={
		to path=(\tikztostart)
		-- ($(\tikztostart)!#1!\pgfkeysvalueof{/tikz/ncbar angle}:(\tikztotarget)$)
		-- ($(\tikztotarget)!($(\tikztostart)!#1!\pgfkeysvalueof{/tikz/ncbar angle}:(\tikztotarget)$)!\pgfkeysvalueof{/tikz/ncbar angle}:(\tikztostart)$)
		-- (\tikztotarget)
	},
	ncbar/.default=0.5cm,
}

\begin{tikzpicture}[every node/.style={transform shape},apply style/.code={\tikzset{#1}},>=stealth']
\def\hordist{2cm}
\def\hordistw{2.5cm}
\def\slope{0.3}
\def\spacing{0}

\node[icon] (user) {\pgfuseimage{user}};
\node[txt,above=0cm of user] (t1) {User};

\node[right=\hordist of user.center,anchor=center,icon] (device) {\pgfuseimage{device}};
\node[txt] at (device |- t1) {Old secondary};

\node[right=\hordistw of device.center, anchor=center,icon] (device2) {\pgfuseimage{device}};
\node[txt] at (device2 |- t1) {New secondary};

\node[right=\hordistw of device2.center, anchor=center,icon] (client) {\pgfuseimage{client}} {};
\node[txt] (client-text) at (client |- t1) {Primary};

\node[right=\hordistw of client.center, anchor=center,icon] (cloud) {\pgfuseimage{cloud}};
\node[txt] at (cloud |- t1) {Cloud};


\def\length{5.5}

\draw[-] ($(user)-(0, .8)$) -- ($(user)-(0,\length)$);
\draw[-] ($(device)-(0, .8)$) -- ($(device)-(0,\length)$);
\draw[-] ($(device2)-(0, .8)$) -- ($(device2)-(0,\length)$);
\draw[-] ($(client)-(0, .8)$) -- ($(client)-(0,\length)$);
\draw[-] ($(cloud)-(0, .8)$) -- ($(cloud)-(0,\length)$);

\node at ($(client)-(.2,1)$) {1};
\node at ($(cloud)-(.2,1.9)$) {2};
\node at ($(device)-(.2,1.8)$) {3};
\node at ($(user)-(-.2,2.9)$) {4};
\node at ($(device)-(-.2,3.3)$) {5};
\node at ($(cloud)-(-.2,3.8)$) {6};
\node at ($(client)-(-.2,4.5)$) {7};
\node at ($(device2)-(.2,5.3)$) {8};

\def\baseline{1}
\def\slope{0.2}

\draw[->,color=black!100] ($(client) - (0,\baseline)$) -- node[above,align=center] {recover} ($(cloud) - (0,\baseline+.2)$);

\draw[->,color=black!100] ($(cloud) - (0,\baseline+.6)$) -- node[above,align=center,pos=0.8] {auth} ($(device) - (0,\baseline+.8)$);

\draw[->,color=gray!100] ($(device) - (0,\baseline+1)$) -- node[above,align=center] {prompt} ($(user) - (0,\baseline+1.2)$);

\draw[->,color=gray!100] ($(user) - (0,\baseline+1.6)$) -- node[above,align=center] {approve} ($(device) - (0,\baseline+1.8)$);

\draw[->,color=black!100] ($(device) - (0,\baseline+2)$) -- node[above,align=center,pos=0.8] {auth} ($(cloud) - (0,\baseline+2.2)$);

\draw[->,color=black!100] ($(cloud) - (0,\baseline+2.8)$) -- node[above,align=center,pos=0.75] {$\keyD^{\S}$} ($(device2) - (0,\baseline+3)$);

\draw[->,color=black!100] ($(client) - (0,\baseline+3.6)$) -- node[above,align=center] {$\keyD^{\C}$} ($(device2) - (0,\baseline+3.8)$);

\draw[<->,color=black!100] ($(client) - (0,\baseline+4.3)$) -- node[above,align=center] {update keys} ($(device2) - (0,\baseline+4.3)$);

\end{tikzpicture}
    }
  }
  \caption{Device migration. The authorization in step 3 can also be sent directly from the primary device.}
  \label{fig:twofe-rec2}
\end{figure}


\begin{figure}[tb]
  \centering
  \subfloat {
    \scalebox{0.73}{
      \tikzset{
	partial ellipse/.style args={#1:#2:#3}{
		insert path={+ (#1:#3) arc (#1:#2:#3)}
	}
}

\tikzset{
	ncbar angle/.initial=90,
	ncbar/.style={
		to path=(\tikztostart)
		-- ($(\tikztostart)!#1!\pgfkeysvalueof{/tikz/ncbar angle}:(\tikztotarget)$)
		-- ($(\tikztotarget)!($(\tikztostart)!#1!\pgfkeysvalueof{/tikz/ncbar angle}:(\tikztotarget)$)!\pgfkeysvalueof{/tikz/ncbar angle}:(\tikztostart)$)
		-- (\tikztotarget)
	},
	ncbar/.default=0.5cm,
}

\begin{tikzpicture}[every node/.style={transform shape},apply style/.code={\tikzset{#1}},>=stealth']
\def\hordist{2cm}
\def\hordistw{2.5cm}
\def\slope{0.3}
\def\spacing{0}

\node[icon] (user) {\pgfuseimage{user}};
\node[txt,above=0cm of user] (t1) {User};

\node[right=\hordist of user.center,anchor=center,icon] (device) {\pgfuseimage{device}};
\node[txt,above=-1em of device] at (device |- t1) {Old secondary\\(unavailable)};

\node[right=\hordistw of device.center, anchor=center,icon] (device2) {\pgfuseimage{device}};
\node[txt] at (device2 |- t1) {New secondary};

\node[right=\hordistw of device2.center, anchor=center,icon] (client) {\pgfuseimage{client}} {};
\node[txt] (client-text) at (client |- t1) {Primary};

\node[right=\hordistw of client.center, anchor=center,icon] (cloud) {\pgfuseimage{cloud}};
\node[txt] at (cloud |- t1) {Cloud};


\def\length{5.3}

\draw[-] ($(user)-(0, .8)$) -- ($(user)-(0,\length)$);
\draw[-,dashed,gray!100] ($(device)-(0, .8)$) -- ($(device)-(0,\length)$);
\draw[-] ($(device2)-(0, .8)$) -- ($(device2)-(0,\length)$);
\draw[-] ($(client)-(0, .8)$) -- ($(client)-(0,\length)$);
\draw[-] ($(cloud)-(0, .8)$) -- ($(cloud)-(0,\length)$);

\node at ($(client)-(.2,1)$) {1};
\node at ($(cloud)-(-.2,1.7)$) {2};
\node at ($(cloud)-(.2,3.1)$) {3};
\node at ($(cloud)-(-.2,3.6)$) {4};
\node at ($(client)-(-.2,4.4)$) {5};
\node at ($(device2)-(.2,5.2)$) {6};

\def\baseline{1}
\def\slope{0.2}

\draw[->,color=black!100] ($(client) - (0,\baseline)$) -- node[above,align=center] {recover} ($(cloud) - (0,\baseline+.2)$);

\draw[->,color=black!100] ($(cloud) - (0,\baseline+.6)$) -- node[above,align=center,pos=0.8] {auth (fails)} ($(device) - (0,\baseline+.8)$);


\draw[<->,color=gray!100] ($(user) - (0,\baseline+1.8)$) -- node[above,align=center,pos=0.5] {out-of-band authentication with identity verifcation} ($(cloud) - (0,\baseline+1.8)$);

\draw[->,color=black!100] ($(cloud) - (0,\baseline+2.6)$) -- node[above,align=center,pos=.75] {$\keyD^{\S}$} ($(device2) - (0,\baseline+2.8)$);

\draw[->,color=black!100] ($(client) - (0,\baseline+3.4)$) -- node[above,align=center] {$\keyD^{\C}$} ($(device2) - (0,\baseline+3.6)$);

\draw[<->,color=black!100] ($(client) - (0,\baseline+4.3)$) -- node[above,align=center] {update keys} ($(device2) - (0,\baseline+4.3)$);



\end{tikzpicture}
    }
  }
  \caption{Device recovery.}
  \label{fig:twofe-rec}
\end{figure}


\section{Security Analysis}
\label{sec:security-analysis}

In this section we analyze the security of \name. We consider the simpler case of the cloud adversary first and then the more complicated case of external adversary for the secondary and the primary device, respectively. Finally, we analyze the security of the recovery mechanism.

\subsection{Cloud Storage Adversary}
\label{sec:security.cloud}
Against the cloud adversary we only consider confidentiality of the user's files, showing that (1) the cloud knows nothing about the user's keys, and (2) that keys derived with $\tPRF$ are sufficiently random, i.e., suitable for file encryption.
For (1), notice that the cloud holds $\keyD^{\S}$ and $\keyC^{\S}$.
However, none of these reveal any information about $\keyD$, $\keyC$ or $\PRFkey$, so no sensitive information about encryption keys is revealed to the cloud.
For (2), we can rely on our argument regarding the first point.
That is, since the cloud knows nothing about $\PRFkey$, it also has no information about keys outputted by $\tPRF$.

\subsection{External Adversary: Secondary Device}
\label{sec:security.secondary}

Regarding the external adversary that attacks the secondary device we consider both confidentiality and availability of the user's files.
For the arguments that follow, we rely on the following observations:
(1) the secondary device cannot influence the randomness of the key-seed derived using the $\SR$ functionality;
(2) the secondary device cannot provide malformed input to the $\tPRF$ functionality without being detected;
(3) the secondary device has no information about $\PRFkey$ due to our choice of secret-sharing scheme;
and (4) the secondary device cannot initiate file decryption or encryption.

\paragraph{Stolen device.}
In this case, the external adversary can access all data at rest on the stolen device.
From (3) we know that no information is revealed about $\PRFkey$.
Additionally, the adversary cannot successfully complete the identity verification protocol and so cannot use the stolen device to ``recover'' the primary device and gain its key share.

\paragraph{Temporary access.}
Here, the user may perform a number of encryption or decryption queries while the adversary has temporary access to the secondary device.
As above, no information is leaked about $\PRFkey$, and it is not possible to start the recovery process, as the user has to approve it.

\paragraph{Malware.}
Malware on the secondary device is similar to temporary access, with a further threat: the malware could tamper with the protocols being run, for example by sending malformed messages.
However, even with such capabilities, the adversary cannot violate either confidentiality or availability.
From (1) we know that the adversary cannot influence the entropy of keys during encryption, and so confidentiality will be upheld (as file encryption will be sufficiently strong).
Furthermore, from (2) the adversary cannot make the derived keys malformed, preserving availability.

\subsection{External Adversary: Primary Device}
\label{sec:security.primary}

Regarding the external adversary that attacks the primary device we use the following observations:
(1) the primary device cannot influence the key-seed, from the definition of the $\SR$ functionality;
(2) the primary device has no information about $\PRFkey$ due to our choice of secret-sharing scheme;
and (3) if user prompts are enabled, the user has the final say in whether decryption is run (or if notifications are enabled the user is informed of each decryption).

\paragraph{Stolen device.}
Similarly to stealing the secondary device, the adversary cannot break confidentiality as it only has one share (1).
Likewise, the adversary cannot start a recovery as it cannot successfully complete the identity verification protocol.

\paragraph{Temporary access and malware.}
If no user prompts (or notifications) have been enabled, then the adversary can break confidentiality as the primary device can request decryptions.
On the other hand, with user prompts, the user would have to approve queries on the secondary device, and so confidentiality is preserved. Notifications allow the user to detect unauthorized decryptions and take corrective action (e.g., clean his device, disable active sessions and change his password).
An external adversary with temporary access cannot encrypt files that cannot later be decrypted, and in this case availability is preserved. File encryption during malware infection obviously cannot provide data availability, as the adversary can replace file contents with arbitrary data (recall Section~\ref{sec:analysis:properties}).

\subsection{Recovery Security}
\label{sec:security-recovery}
Finally, we analyze the security of the device recovery and migration protocols when the adversary has control over one of the user's devices. 
Regardless of whether an adversary attacks one of the user's devices or the storage server, the same idea holds:
Recall that the master key $\PRFkey$ can be written $\PRFkey = \keyC + \keyD = (\keyC^{\S} + \keyC^{\D}) + (\keyD^{\S} + \keyD^{\C})$.
It is quite easy to see that an attacker never sees enough key material in order to determine $\PRFkey$.
Indeed, the above equality can be alternatively written as one of the following three equations depending on which device (or the cloud) that is compromised:
\begin{align*}
  \PRFkey &= (\keyC + \keyD^{\C}) + \keyC^{\S},\\
          &= (\keyD + \keyC^{\D}) + \keyC^{\S},\\
          &= (\keyC^{\S} + \keyD^{\S}) + \keyC^{\D} + \keyD^{\C}.
\end{align*}
In each case, the attacker learns only the information in the parenthesis.
The remaining values are, by definition, uniformly at random and thus reveal nothing about $\PRFkey$.
It only remains now to argue that the protocols for migration and recovery themselves do not lead to an attack where more than one key-share can be learned, which we do next. In the following, $\C$ stands for primary, $\D$ for secondary and $\S$ for cloud storage.

\paragraph{Adversary controls $\C$, tries to recover $\D$.}
This matches the case described in Section \ref{sec:security.primary}.
In case of device migration, the old $\D$ is still active (will respond to a query by $\S$) and so the user will be notified that a recovery has been started.
Thus, the user will see that something is amiss and can take appropriate actions.
On the other hand, if $\D$ is not available, the cloud will behave according to Figure \ref{fig:twofe-rec}.
In this case, and because we assume that the adversary cannot successfully complete the out-of-band identity verification, it will not be able to get the cloud storage to send its key share to a device controlled by the adversary.

\paragraph{Adversary controls $\D$, tries to recover $\C$.}
This proceeds exactly as above: if $\C$ is responding, the user will detect (and abort) the recovery; if $\C$ is not responding, as the adversary cannot complete the out-of-band authentication, no identifying information can be supplied by the adversary.

\paragraph{Adversary controls $\C$, tries to recover $\C$.}
Interestingly, this ``attack'' can in fact be carried out by the adversary.
Since the server will contact the old device (in this case $\C$, which is controlled by the adversary), the adversary can supply information for a device it controls.
However, no \emph{new} information is made available to the adversary.
Indeed, all it learns are the key information stored on $\C$ which it already have access to.

\paragraph{Adversary controls $\D$, tries to recover $\D$.}
Same argument applies here: no new information is revealed to the adversary.
In effect, all that happens is that the system migrates to another device controlled by the adversary.
It should be noted, however, that this attack could be used by the adversary to escalate temporary access of the device to permanent access---at least until the user runs the recover protocol.

\subsection{Usability Considerations}
\label{sec:reduced-prompts}

Our solution can be deployed in two ways. The simplest option is without user prompts. In this case, the user needs to \emph{own} and enroll two devices, but the user does not have to \emph{do} anything on the secondary device during file encryption/decryption. Thus, the user experience is unchanged from standard cloud storage (the only difference is a minor key derivation delay which we evaluate in Section~\ref{sec:evaluation}). 

The second way to deploy \name is with decryption prompts on the secondary device for added protection. We acknowledged that frequent user prompts and notifications can have negative effects such as habituation. One way to reduce the number of prompts and notifications is to let the user to set different security levels for different files.
For example, the user could set the security of their calendar to be ``low'', meaning that decryption of the calendar would happen without any prompt; while the user's tax returns could be set to ``high'', meaning the decryption would need to prompt the user. Such policies can be applied on a per-folder basis.
Another possible option is to allow the user to approve the first decryption of a file; then, all subsequent decryptions within a user-defined time window would automatically get approved.

\subsection{Analysis Summary}
The last two rows of Table~\ref{tab:comparison} summarize our  analysis. When user prompts are enabled, \name provides all of our security properties (data availability in the case of malware is marked as partially-provided, because availability holds only for files that were encrypted and uploaded \emph{before} the primary was infected).
When user prompts or notifications are not used, we also mark confidentiality for temporary access and malware as partially-provided properties, because confidentiality is preserved in the case of compromised secondary, but not primary. Compared to simple key sharing, key duplication, and all the analyzed commercial services, \name provides better confidentiality and availability.


\section{Performance Evaluation}
\label{sec:evaluation}

We implemented the primary device as a Python application, and the secondary device as an Android application.
The key derivation protocol implementation relies on Elliptic-Curve cryptography --- we use \texttt{fastecdsa}~\cite{fastecdsa} for the primary device, and \texttt{Spongy Castle}~\cite{spongyCastle} for the secondary device.
We use $\mathsf{SHA256}$ as the hash function. 
Our system abstracts the storage backend and allows to use any existing provider as an untrusted cloud storage. For our prototype we use Dropbox for storing and retrieving ciphertext.
Secondary and primary devices communicate with each other via TCP sockets. 
For our prototype, the primary laptop device withe the Python app implements the server side of the TCP sockets. After creation, the primary app sends the connection parameters (IP address, port number) to the Android app in a notification message~\cite{firebaseCloudMsg} which enables seamless connection between the two devices.

\subsection{Experiments}

In a typical encrypted cloud storage scheme, the client generates a symmetric key, encrypts the file and sends the ciphertext to the server.
The overhead of \name compared to such standard single-device encryption system is the interactive key derivation process. 
Essentially, our solution replaces the local key derivation with an interactive protocol that is run between the two user's devices. To evaluate performance overhead of \name, we measure the delay of the interactive protocol that is executed to derive the needed keys.
We measure the time it takes to compute a file key for encryption and decryption separately\footnote{Decryption requires less interaction because the seed is provided by the primary device, instead of being generated interactively.}.

To measure the key derivation delay, we run the Python app on a laptop with an Intel CPU, model i7-8565U and 16GB of RAM running Ubuntu 18.04, while the phone app runs on a Samsung Galaxy S9+ with Android 10. Both devices are connected via WiFi to either (i) a university network; or (ii) a home network. 
As a benchmark, we perform 200 encryption and decryption operations on different file sizes.
We select files of size 100KB, 1MB, 5MB, 10MB and 100MB. We intentionally select files that are relatively small (e.g., compared to video recordings) because the overhead of \name is more noticeable. 

\subsection{Results}

Table~\ref{tab:exp} shows the interactive key generation delay during encryption and decryption of a single file\footnote{The communication overhead of key derivation could be optimized using simple techniques like batching multiple key derivations together, while computational overhead could be reduced with a multi-threaded implementation.}.
Deriving a single encryption key takes 380 ms in both networks, while key derivation during encryption takes 221 ms.
As expected, standard deviation on the university network is lower -- we attribute this to higher quality network equipment.
\name introduces some overhead due to the interactive key derivation function: only about 30\% of the time is spent on computation during encryption, respectively, 50\% during decryption, while the rest of the time is spent communicating data between the devices.
However, this overhead is small compared to the fixed costs of file upload and download, as we show in the following.

Figure~\ref{fig:dropbox_eval} shows file upload and download times for \name using Dropbox as a storage backend. The light blue portion of each bar is the data upload/download time, green part indicates the encryption/decryption time, and the purple segment is the aforementioned key generation time. The overhead of our solution w.r.t. an unencrypted cloud storage solution such as Dropbox is the sum of encryption/decryption, and key generation time.
As expected, key derivation time is independent of the file size, while data upload/download times increase for larger files. 
In our worst-case, represented by encryption, for small 100KB files like text documents the overhead of our solution is approximately 20\%. For medium-sized 5MB files like high-resolution pictures the overhead goes down to 15\%, and becomes a negligible 5\% for larger files of 100MB.

\begin{table}[t]
	\centering
	\footnotesize
	\caption{Key derivation performance evaluation.}
	\label{tab:exp}
	\begin{tabular}{llcccc}
		&   & Time(ms) & std.dev & Computation(ms) & std.dev \\ \toprule
		\multirow{2}{*}{\rotatebox[]{90}{Home}} & Enc. & 380.70   & 94.77   & 118.25          & 21.6    \\  
		& Dec. & 221.48   & 76.95   & 109.84          & 13.16   \\ \midrule
		\multirow{2}{*}{\rotatebox[]{90}{Uni.}} & Enc. & 380.50   & 62.65   & 123.63          & 20.52   \\  
		& Dec. & 210.16   & 58.26   & 110.52          & 12.67   \\ \bottomrule
	\end{tabular}
\end{table}

\begin{figure}[t]
	\centering
	\includegraphics[width=0.49\linewidth]{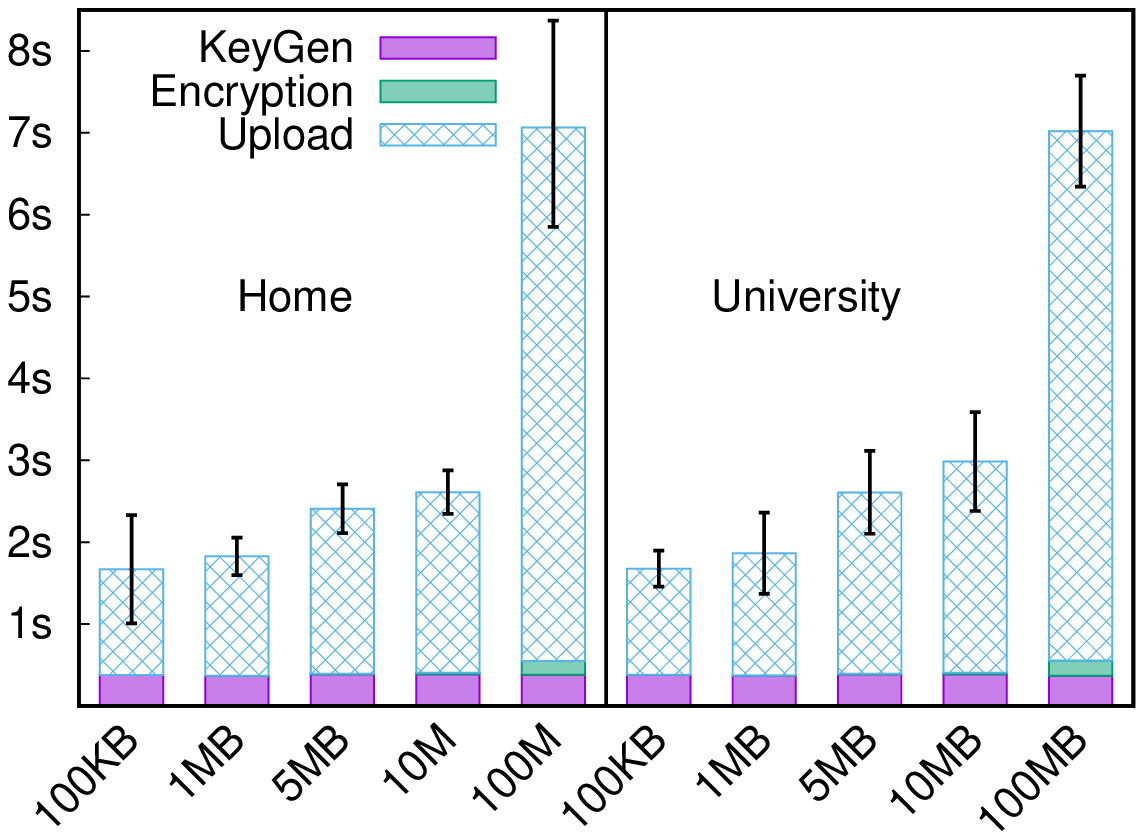}
	\includegraphics[width=0.49\linewidth]{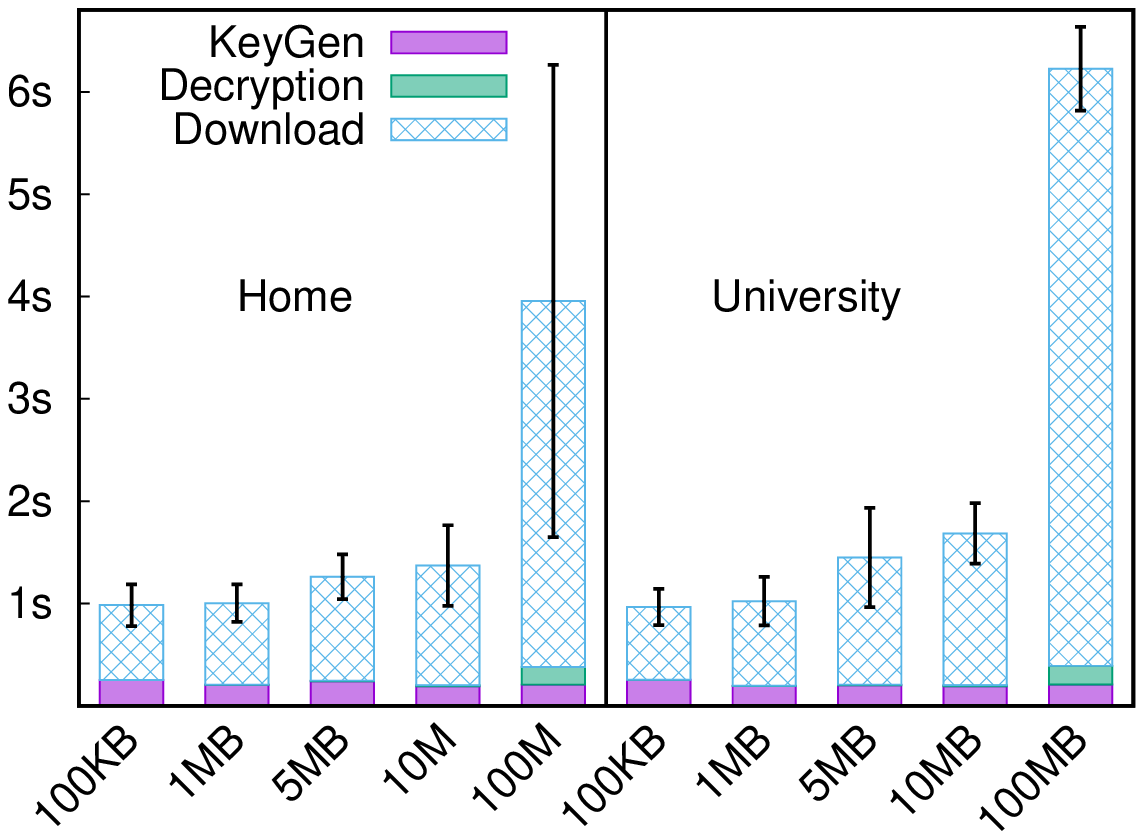}
	\caption{Time spent for key generation, encryption/decryption, and data upload/download, when storing (left) and retrieving (right) files from Dropbox using \name.	}
	\label{fig:dropbox_eval}
\end{figure}


\section{Generalization}
\label{sec:cw}

The main benefits of \name are strong confidentiality and high availability. Obviously, both of them are desirable properties in other applications as well, and in this section we discuss how to adapt our technique to other use cases. 
We start by observing that, on a high level, \name contains two distinct components. The first component manages key shares in such a way that no single device knows the full key, but the user can still recover the key should they lose access to a device (with the help of the cloud provider). The second component derives file encryption keys for cloud storage using the key shares managed by the first component.\footnote{This observation was in fact already implicitly made when analyzing the security of the recovery phase of our protocol in Section~\ref{sec:security-recovery}, where we did not rely on how encryption was performed.}

One benefit of our design is that by replacing the second component we can \emph{re-purpose} our solution to other applications while retaining the strong confidentiality and availability guarantees provided by the first component. For example, we can replace the encryption key derivation with a threshold signature scheme, allowing to use our approach to securely manage signing keys. We discuss one such example next. 

\subsection{Example Use Case: Crypto Wallets}
\label{sec:crypto-wallets}

Safe management of cryptocurrency credentials, such as Bitcoin private keys, has become increasingly important. The system that is responsible for the management of user's cryptocurrency credentials is typically called a wallet. 

\paragraph{Wallet types.} Popular wallets can be divided into the following three categories.
The first type is \emph{online wallet} (or web wallet) where the user's credentials are stored by an online service provider that typically operates a currency exchange. The main problem of online wallets is obvious: the service provider needs to be trusted and a breach can mean complete loss of funds for all users. The use of online wallets is generally not recommended for significant funds. Indeed, in 2019 alone more than ten major exchanges where breached resulting in stolen funds worth of hundreds of millions of US dollars and affecting thousands of users~\cite{exchange-hacks}.

The second type is \emph{software wallet} (sometimes called hot wallet) where the user's cryptocurrency credentials are stored locally on a standard PC. The main problem of software wallets is malware. If the user's platform becomes infected, the malware can modify transactions or steal credentials which can mean full loss of all funds. Such malware has become popular, e.g., in 2018 alone one such malware instance was spotted in more than 250,000 user platforms~\cite{wallet-malware}.

The third is \emph{hardware wallet} (also known as cold wallet) where the credentials are stored on a separate small device that can be connected to the user's PC. Hardware wallets have obvious security benefits over the previously listed approaches, as they do not require a trusted online service provider nor are they susceptible to malware on the user's PC. Consequently, hardware wallets are quickly becoming the \emph{de facto} way of managing credentials that control to any significant amount of cryptocurrency funds. For these reasons, we focus our discussion on hardware wallets.

\paragraph{Problems of hardware wallets.}
Hardware wallets provide strong confidentiality for private keys stored inside them, but their practical use has also significant issues. One problem is poor availability: should the hardware device get lost or damaged, all funds are irrecoverably lost without a proper backup. The available backup mechanisms are cumbersome and error prone. For example, current market leader products like Ledger Nano and Trezor One show a \emph{recovery sentence} that consists of 24 words to the user at the time of initialization~\cite{trezor-wiki,ledger-recovery}. The user is expected to copy this sentence on a piece of paper and keep it safe. As such process is prone to errors, some brands like Ledger recommend confirmation procedures like typing all the 24 words back to the device to eliminate possible errors (a tedious task on small device with only two buttons). If such backup processed is skipped, a human error takes place during it or the piece of paper gets lost, all funds are gone. The importance of wallets with good availability is highlighted by recent studies that estimate that up to one fifth of all Bitcoins are permanently inaccessible~\cite{wsj-missing-bitcoin}.

Another problem with hardware wallets is poor \emph{transaction confirmation security}. In a typical hardware wallet, the transaction details are received from the user's PC that provides a rich UI. Since the PC platform may be infected with malware, the user is expected to confirm the transaction details, such as recipient's address and transferred amount, using the display of the hardware wallet device. Popular devices, like Ledger Nano S, have a tiny display (30x128 pixels) that make verification of a typical cryptocurrency address (32 hexadecimals in the case of Bitcoin) tedious. Previous studies in the context of online banking have shown that transaction confirmation process can susceptible to attacks where parts of the payment details have been modified by malware~\cite{haupert2019}. The fact that cryptocurrency payments need to be confirmed using very restricted UIs is likely to increase the success rate of such attacks.

To summarize, although hardware wallets are arguably the best available way to store cryptocurrency credentials at the moment, the currently available solution suffer from poor availability, transaction insecurity and bad usability.

\subsection{Two-Factor Wallet (\walletname)}
\label{sec:tsig}

Next, we present a new solution based on the our approach to address the problems listed above. We call this solution Two-Factor Wallet (\walletname). As already hinted, the main idea is to combine the key share management component with an off-the-shelf threshold signature scheme. 

Recall that a $(n,t)$ threshold signature protocol for a signature scheme $(\Sig,\Ver)$ is an interactive protocol $\Pi_{\Sig}$ where $n$ participants each hold a part of the signing key $\sk$, and where at least $t$ must participate in order to sign a message.%
\footnote{A threshold signature scheme also comprises a threshold key generation procedure that we omit here since the master key is not generated by the system, but provided as input during enrollment.}
Essentially, a threshold signature scheme is a protocol which allows two or more devices to jointly sign a document, without any single device being able to do so on their own.%

\paragraph{Design outline.}
To keep the present presentation in line with the previous sections, consider a functionality for threshold signatures presented in Functionality~\ref{func:tsig}.
This functionality can be realized by any of the many recent protocols for threshold signatures with active security, such as \cite{DBLP:conf/sp/DoernerKLS18,DBLP:conf/esorics/DalskovOKSS20,DBLP:conf/ccs/GennaroG18,DBLP:conf/crypto/Lindell17} for ECDSA --- the most used signature scheme for cryptocurrencies. 

\begin{functionality}{func:tsig}{$\tSIG(\cdot)$---Threshold Signature}
  Let $(\Sig,\Ver)$ be a signature scheme (for simplicity, we ignore key generation).
  \begin{itemize}
  \item On $\cmda{init}{id, m,\keyC}$ from the primary device, store $(id, m, \keyC)$ and wait.
  \item On $\cmda{init}{id',m',\keyD}$ from the secondary device and if $(id',m,\keyC)$ was previously stored:
    \begin{itemize}
    \item If $m\neq m'$, set $\sigma=\bot$.
    \item If $m=m'$ compute $\sigma=\Sig(\keyC+\keyD,m)$
    \end{itemize}
  \item Output $\sigma$ to the primary device.
  \end{itemize}
\end{functionality}

We now use Functionality~\ref{func:tsig} to design \walletname. 
We now illustrate how \walletname works when the user wants to make a transaction.
We assume that system enrollment already happened (cf.~Figure \ref{fig:twofe-init}) albeit with the small twist that the public key associated with the user's private key is also distributed towards all parties.%
\footnote{For ECDSA, the primary device would send a $\pk$ value just as the secondary device does, and the real public key would be the product of the two.}
We let $\pk$ denote this value, and $m$ the message that the user wants to sign (i.e., the transaction):
\begin{enumerate}
\item The user enters $m$ into the primary device which forwards it to the secondary device.
\item The secondary device displays $m$ towards the user for approval.
  Note that this is essentially the ``prompt'' feature described in Figure \ref{fig:twofe-dec}.
\item Assuming the user approves the message on the secondary device, the primary and secondary device both enter their key-share and message into Functionality \ref{func:tsig}.
\item When the primary device receives $\sigma$, it checks if $\sigma=\bot$ and displays an error if this is the case; otherwise it outputs $\sigma$.
\end{enumerate}

\paragraph{Brief analysis.}
%
To analyze 2FW, we consider the same threat model that was outlined in Section~\ref{sec:2fe_overview}. 
Regarding confidentiality, as nothing was changed about how keys are generated, all confidentiality guarantees with respect to the signing key is preserved. 
By definition Functionality \ref{func:tsig} does not leak any of the key-shares, and therefore nothing about the signing key.
Functionality \ref{func:tsig} can be instantiated by any number of actively secure threshold signature protocols.
\walletname is guaranteed to sign only the right message since the user checks the message to be signed on both their devices.
If the user proceeds with signing, and since one of the devices is assumed to be honest, the $\mathsf{tSIG}$ Functionality is guaranteed to output either an error ($\bot$) or a signature on the correct message. Notice that for this to be true, it is important that the instantiation is actively secure.

Regarding availability, \walletname provides exactly the same guarantees as \name. If one of the user device is lost of stolen, the user can still recover his cryptocurrency credential. This is a significant improvement over current hardware wallets.

Finally, we consider usability and transaction confirmation security. Compared to hardware wallets, the main usability benefit of \walletname is that the user does not need to perform complicated and error prone backup procedures. Also safe transaction confirmation is made easier to the user, since the payments can be confirmed from a reasonably-sized smartphone screen (on the secondary device) opposed to a tiny hardware wallet display. Previous studies in the context of mobile banking have indicated that a larger user interface enable more structured transaction presentation and reduce confirmation errors~\cite{haupert2019}.


\section{Related work}
\label{sec:related-work}

\paragraph{Encrypted cloud storage.}
Prior works~\cite{adya2003farsite, goh2003sirius, wang2009ccsw} propose systems that transform a public storage service into an encrypted one by storing only encrypted files remotely.
Kamara and Lauter~\cite{DBLP:conf/fc/KamaraL10} present a solution for encrypted cloud storage in an enterprise setting and introduce a new component that manages the key material.
The main drawback of these solutions is that they consider only the storage provider as an attacker, while user devices are fully trusted.
Tang et al.~\cite{tang2012fade} propose a system that acts as an overlay on top of traditional storage providers and makes use of a quorum of key servers to assure that files are accessed and deleted (by erasing the key) according to the user's policies.
In contrast, we assume no trust on any storage or key server. 

\paragraph{Multi-devices solutions.}
The works of Liu et.~al \cite{liu2015two}, and later Zuo et.~al \cite{zuo2017fine}, focus on \emph{file sharing} with a notion of multi-factor security.
The main goal of these schemes is to allow user $A$ to encrypt a file towards another user $B$, such that $B$ can only recover the file with a key \emph{and} a hardware token.
Both works encrypt the file twice, where the first encryption is performed at $A$ and the second at the storage server.
The main difference to our work is that such schemes require a trusted key distribution mechanism, and thus a trusted third party. Our solution assumes no trusted third parties. 
In addition, since these works focus on file sharing, they do not solve the cloud storage problem. 

Similar to these works, \cite{7414154} presents a solution for two factor encryption also utilizing hardware tokens, where first a random string is sampled and a key is derived on a commodity hardware token.
The client then derives the final key from the such key and a password.
This approach does not allow one to control if a key is derived or not, in case the client is malicious.
Nor does it protect against a user who picks a bad password (such as the users we consider) and an adversary who compromises the hardware token.

Another prior work \cite{8284478} explore using Shamir secret-sharing to store an encryption key on multiple devices.
However, this work does not address key derivation; nor does it consider how the user would actually go about storing all these shares.

Omnishare \cite{8325583} is another recent system which allows multiple devices to access the same files from encrypted cloud storage. Omnishare is based on key duplication which is vulnerable to device compromise, as explained in Section~\ref{sec:approach}.

\paragraph{Oblivious key management.}
Our threshold PRF construction leverages the oblivious PRF protocol from \cite{DBLP:conf/asiacrypt/JareckiKK14} (see Appendix \ref{sec:2fe}).
The same oblivious PRF protocol is adopted in several other use cases as well, such as in password-protected secret-sharing \cite{jarecki2016highly},\cite{DBLP:conf:acns:JareckiKKX17}, password management \cite{shirvanian2017sphinx}, or key management services for large-scale storage solutions \cite{10.1145/3319535.3363196}.


\section{Conclusions}
\label{sec:conclusions}

The level of security provided by the existing encrypted cloud storage solutions has been unclear, since most commercial service providers do not publish detailed security analysis with well-defined threat models and security properties. To address such gap, we have defined an extensive but realistic threat model for encrypted cloud storage and analyzed 10 popular commercial services. We conclude that none of them provides sufficient confidentiality and availability. 

Motivated by such lack of secure solutions in the market (and research literature), we have designed a novel solution called Two-Factor Encryption (2FE), where two user’s devices, such as laptop and smartphone, interact in file encryption and decryption, in the spirit of two-factor authentication. 2FE tolerates a broad range of threats that include untrusted cloud storage, device theft and compromise, and various benign human errors. We believe that 2FE is by far the most secure and practical encrypted cloud storage solution known today.

Finally, our flexible construction easily adapt to other use cases---we show it by presenting a novel design for a secure cryptocurrency wallet.


{\small
\bibliographystyle{plain}
\bibliography{refs}}

\begin{appendix}
  \section{Instantiations}
\label{sec:2fe}

In this appendix we provide  instantiations and security arguments for the primitives and functionalities that were defined in Section~\ref{sec:functionalities} and used in our solution design in Section~\ref{sec:full-protocol}. 

\subsection{Secret Sharing with Updates}
\label{sec:enrollment}

We start by describing the specifics of the secret sharing that is compatible with our instantiation of the $\tPRF$ functionality (described below) and that supports key share updates used in our solution.

The primary and secondary each generate at random a key share $\keyC$ and $\keyD$, both elements of the finite field $\Z_{p}=[0,\dots,p-1]$.
Note that this defines the master encryption key $\PRFkey=\keyD+\keyC$ where addition is modulo $p$.
The secondary device sends a public key $\pk=\textsf{Gen}(\keyD)=\keyD\cdot G$ where $G$ is a generator of $\mathcal{G}$, a subgroup of order $p$ of an elliptic curve group $E$.
We assume that the Decisional Diffie-Hellman (DDH) assumption holds in $\mathcal{G}$ (see Definition \ref{def:ddh}).

Recall from Section \ref{sec:challenges} that we need to refresh all key material after each recover process.
Otherwise, compromises during different times of operation would eventually leak all of $\PRFkey$.
Refreshing keys is performed by first adding a secret-sharing of 0 to each key share, using the aforementioned property that we can compute additions over shares. 
Say, $\C$ picks a random value $z$ and then updates its key to be $\keyC'=\keyC+z$ and sends $z'=-z$ to $\D$ who sets $\keyD'=\keyD+z'$.
Notice that $(\keyC',\keyD')$ still defines a sharing of $\PRFkey$.

\paragraph{Security arguments.}
Clearly, neither the secondary device knows nothing about the primary device's key share as this was generated entirely locally.
The primary device knows something about the secondary's key-share, namely $\pk=\keyD\cdot G$.
However this reveals nothing as the discrete log problem is assumed to be hard for the group $\mathcal{G}$ (a consequence of assuming DDH holds in $\mathcal{G}$), and because $\keyD$ was picked at random by the secondary device.

To show that key share updates are secure we need to argue the following: $\PRFkey$ remains hidden from an adversary $\adv$, so long as (1) $\adv$ only corrupts one device; and (2) when $\adv$ corrupts a different device, then we run the key recovery protocol in between corruptions \emph{without} involvement from $\adv$.
(1) clearly holds from the same argument we used in the previous section; that is, $(\keyC,\keyD)$ is a 2-out-of-2 secret sharing of $\PRFkey$, and so $\adv$ can learn one of $\keyC$ or $\keyD$ without learning anything about $\PRFkey$.
Now consider (2) and let $\keyD$ be a share learned by the adversary \emph{before} the key refresh protocol, and $\keyC=\keyC+z$ be a share learned \emph{}after the key refresh protocol.
Crucially, because the adversary does not participate in the recovery protocol, $z$ is \emph{unknown} to $\adv$.
In particular, $\keyD+\keyC'=\PRFkey+z$ reveals no information about $\PRFkey$.

\subsection{Shared Randomness Protocol}

Our solution needs a way to generate a uniformly random bit-string that is the same for both parties (Functionality~\ref{func:cointoss}). Next, we describe how such protocol can be constructed using a commitment scheme which we build from a hash function. This is shown below in Protocol \ref{prot:ct} that relies on the fact that $s_{0}$ is a high entropy string.
If this was not the case, the additional randomness needs to added to the commitment in order to ensure hiding.

\begin{protocol}{prot:ct}{$\SR$ --- Shared Randomness}
\begin{enumerate}
\item $\C$ samples $s_{0}\pick\bits^{\kappa}$ and sends the commitment $c=H(s_{0})$ to $\D$, where $H$ is a hash function modeled as a Random Oracle, and $\kappa$ a security parameter (e.g., $256$).
\item $\D$, on receiving $c$, picks $s_{1}\pick\bits^{\kappa}$ and sends $s_{1}$ to $\C$.
\item $\C$ when it receives $s_{1}$, sends $s_{0}$ to $\D$ and outputs $s=s_{0}\oplus s_{1}$.
\item $\D$ when it receives $s_{0}$, checks that $c=H(s_{0})$ and if so outputs $s=s_{1}\oplus s_{0}$.
\end{enumerate}
\end{protocol}

\paragraph{Security argument.}
Let see why the $\SR$ protocol is secure, that is, why the output is cannot be biased unreasonably.
Consider first the view of the secondary $\D$: the commitment $c$ received from the primary $\C$ reveals nothing about $s_{0}$, and so $\D$ cannot pick $s_{1}$ which would bias the output towards a particular value.
On the other hand, $\C$ can introduce a slight bias in the output.
The issue is that protocols involving only two parties are inherently unfair. That is, the primary $\C$ gets to see the result $s$ before the secondary $\D$ and can thus decide to abort if it does not like the result.
However, this bias is bounded \cite{cleve1986limits} and does not pose a practical problem for large outputs.

\subsection{Threshold PRF Protocol}
\label{sec:key-derivation}

Finally, our solution needs a threshold PRF protocol defined as Functionality \ref{func:tprf}. This protocol lets the primary input $\keyC$, the secondary inputs $\keyD$ and both inputs some additional information.
The output of $\tPRF$ is either a symmetric key $k$ given only to the primary, or an error.
In case no error is reported $k$ is guaranteed to have been computed ``correctly''. Basically, the primary $\C$ gets a guarantee that the secondary $\D$ did not misbehave.

We obtain such $\tPRF$ protocol by performing a slight alternation to the 2HashDH oblivious PRF in \cite{DBLP:conf/asiacrypt/JareckiKK14}.
We relax the ``obliviousness'' requirement of the construction, and reveal the input (the string $x$ in Protocol \ref{prot:tprf} below) to both devices.
This relaxation is needed to support prompting on the secondary device during decryption.
Our $\tPRF$ protocol proceeds as shown in Protocol \ref{prot:tprf} below.

\begin{protocol}{prot:tprf}{$\tPRF$ --- Threshold PRF}
\begin{enumerate}
\item Let $x\in\bits^{*}$ be the input, given to $\C$ and $\D$.
\item $\D$ computes $B= \keyD\cdot (H'(x))$ and a proof of correctness $\pi=\nizkDDH.\mathsf{Gen}(G,\keyD,H'(x),B)$. Send $(B,\pi)$ to $\C$.
\item When $\C$ receives the tuple $(B,\pi)$ first check if $\nizkDDH.\mathsf{Ver}(G,H'(x),B,\pi,\pk)=1$ and output $\mathtt{error}$ if this is not the case.
  Otherwise, $\C$ computes and outputs key $k$ as $k= H(x,B+\keyC\cdot (H'(x)))$.
\end{enumerate}
\end{protocol}

Notice that if the output is not $\mathtt{error}$, then with high probability $k=H(x,\PRFkey\cdot(H'(x)))$.

\paragraph{NIZK proofs.}
The above protocol requires \emph{non-interactive zero-knowledge} (NIZK) proofs for equality of discrete logs. Let $E(\Z_{q})$ be an elliptic curve over $\Z_{q}$.
A  NIZK proof for the statement $\log_{A}(B)=\log_{G}(P)$ is a tuple $\nizkDDH=(\mathsf{Gen},\mathsf{Ver})$ where \cite{10.1007:3-540-48071-4_7}:
\begin{itemize}
\item $\nizkDDH.\mathsf{Gen}(G,x,A,B)$ with $A=xB$, first picks a random $t$ and then computes $w=H(G,xG,A,B,tG,t A)$ where $H$ is a hash function.
  Output proof $\pi=(w,s)$ where $s=t+w\cdot x$.
\item $\nizkDDH.\mathsf{Ver}(G,A,B,\pi,P)$ with $P=x G$.
  Parse $\pi$ as $(w,s)$ and compute
  $$w'=H(G,P,A,B,s G-w P, s A-w B).$$
  Output $1$ if $w'=w$ and $0$ otherwise.
\end{itemize}

\paragraph{Security argument.}
For the subsequent security arguments for it to be useful in general for key generation, $\tPRF$ must satisfy the following criteria:
\begin{enumerate}
\item The secondary device must behave honestly.
\item No information about $\keyD$ must be leaked to the primary, and no information about $\keyC$ must be leaked to the secondary.
\item The output must be indistinguishable from a random string.
\end{enumerate}

\paragraph{\emph{Point 1.}}
This follows immediately from the security of the NIZK.
Suppose the secondary is able to cheat, i.e., it sends $B'\neq\keyD\cdot(H'(x))$ and $\pi=(w,s)$ that is accepted by the primary.
Observe that the check that the primary performs can be written as
\begin{align*}
  w &= H(G,P,A,B',sG-wP,sA-wB') \\
    &=H(G,P,A,B',rG,r'A).
\end{align*}
where $r=(s-w\keyD)$ and $r'=(s-wx')$ for some $x'\neq\keyD$.
In particular, $r$ and $r'$ could be computed by the secondary.
But, since $x'\neq\keyD$ we have that $w=(r-r')/(x'-\keyD)$, we can conclude that such $r$, $r'$ cannot be found unless with probability $1/2^{\kappa}$ where $\kappa$ is the length of the output of $H$.

\paragraph{\emph{Point 2.}}
That no information about $\keyC$ is leaked to the secondary device is obvious, as no message involving $\keyC$ is ever sent to the secondary device.
That no information about $\keyD$ is leaked to the primary device requires us to prove the following:
Suppose the primary receives values $B_{1},B_{2},\dots,B_{n}$ where $B_{i}=\keyD G_{i}$ for some known $G_{i}$.
We must now argue that these values reveal no information about $\keyD$; in particular, that this sequence of $B$'s are indistinguishable from a series of random group elements.
To that end, we define the following experiment
\begin{enumerate}
\item Pick $x$ and send $P=xG$ to the adversary. Let $c\in\{0,1\}$ be a randomly chosen bit.
\item In a loop, the adversary sends $t$ and receives back $B=x(H'(t))$.
\item The adversary finally sends $t'$. If $t'$ was sent in the previous step, output $c$ and stop.
  Otherwise, if $c=0$, compute $B'=x(H'(t'))$ and if $c'$ pick $B'$ as a uniformly random group element.
  Return $B'$ to the adversary.
\item The adversary outputs $c'\in\{0,1\}$ and wins if $c'=c$.
\end{enumerate}

We will make use of the following computational assumption.

\begin{definition}{Decisional Diffie Hellman (DDH).}
  \label{def:ddh}
  Let $Z_{p}$ be a group with generator $G$.
  For any PPT adversary $\adv$, we have that
  \[
    |\Pr[\adv(uG,vG,wG)=1] - \Pr[\adv(uG,vG,uvG)=1]|
  \]
  is negligible, where $u,v,w$ are picked at random from $[0,\dots,p-1]$.
\end{definition}
In a nutshell, the above states that there is no efficient program for distinguishing between $wG$ and $uvG$, given $uG$ and $vG$.

We now prove the following.

\begin{lemma}
  Any adversary wins the above game with at most negligible probability assuming the Decisional Diffie-Hellman (DDH) problem is hard.
\end{lemma}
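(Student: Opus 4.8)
The plan is to prove this by a reduction to DDH (Definition~\ref{def:ddh}) in the random oracle model for $H'$. The guiding observation is that every group element the adversary ever sees can be placed relative to a single Diffie--Hellman pair: the value $P=xG$ plays the role of $uG$; since $H'$ maps into $\mathcal{G}$, which has prime order $p$ with generator $G$, each fresh hash output can be programmed as $H'(t)=r_{t}G$ for a known uniform exponent $r_{t}$, so the loop answer $B=x\cdot H'(t)=r_{t}\cdot P$ is perfectly simulatable without knowing $x$; and the single point $t'$ chosen as the final challenge is the one whose hash is instead set to $vG$, so that the ``real'' value $x\cdot H'(t')$ equals $uvG$. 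Then the bit $c$ exactly controls whether the adversary is handed $uvG$ (when $c=0$) or a uniformly random group element $wG$ (when $c=1$), and predicting $c$ is the DDH distinguishing task.

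Concretely, I would construct a DDH distinguisher $\bdv$ that on input $(uG,vG,Z)$ sets $P:=uG$, samples an index $j^{\star}\pick\{1,\dots,q\}$ (with $q$ a polynomial bound on $\adv$'s distinct $H'$-queries), and runs $\adv$ on $P$. Distinct $H'$-queries are answered lazily: the $j$-th one, $t_{j}$, is answered with $r_{j}G$ for a fresh $r_{j}\pick\Z_{p}$ if $j\neq j^{\star}$, and with $vG$ if $j=j^{\star}$. A loop query $t$ is first resolved as a hash query if needed and then answered with $B:=r_{t}\cdot P$; if $\adv$ ever issues a loop query equal to $t_{j^{\star}}$, $\bdv$ aborts and outputs a uniform bit (its guess was wrong). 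When $\adv$ submits the final challenge $t'$: if $t'\neq t_{j^{\star}}$, $\bdv$ aborts and outputs a uniform bit; otherwise it returns $B':=Z$. Finally $\adv$ outputs $c'$, and $\bdv$ outputs $1$ (``$Z=uvG$'') iff $c'=0$.

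The analysis is then routine bookkeeping: conditioned on $j^{\star}$ being the index of the final $t'$ (which happens with probability $1/q$, since $t'$ must be among the queried points --- if the adversary re-uses a point already sent in the loop the experiment halts and outputs $c$ without counting a win, so that branch can be ignored), $\bdv$ perfectly simulates the game with $c=0$ when $Z=uvG$ and with $c=1$ when $Z$ is random; on a wrong guess $\bdv$'s output is independent of $Z$. A short calculation relating distinguishing advantage to guessing advantage then gives $\mathrm{Adv}^{\mathrm{game}}_{\adv}\le\tfrac{q}{2}\cdot\mathrm{Adv}^{\mathrm{DDH}}_{\bdv}$, which is negligible under DDH.

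The main obstacle I anticipate is precisely the step of deciding, before the final challenge is revealed, which of the adversary's many $H'$-queries to embed the DDH instance into; I resolve it by the standard guessing argument, at the cost of a factor $q$ in the advantage. The remaining care needed is (i) checking that the simulated loop responses $B=r_{t}\cdot P$ are identically distributed to the real responses $x\cdot H'(t)$ and stay consistent under repeated queries, and (ii) handling the degenerate branch where $t'$ coincides with an earlier loop point, which by the game's definition simply halts the experiment and is not counted as a win, hence imposes no constraint.
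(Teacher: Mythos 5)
Your proposal is correct and follows essentially the same reduction as the paper: set $P=uG$, program the random oracle $H'$ so that one guessed query is answered with $vG$ and all others with $r_tG$ for known exponents (making the loop answers $r_t\cdot P$ simulatable), answer the final challenge with the third DDH element, and absorb the guess into a factor-$q$ loss in advantage. Your write-up is if anything slightly more careful than the paper's (explicitly aborting when a loop query hits the embedded point and stating the advantage bound), but the underlying argument is the same.
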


\begin{proof}
  Let $(uG,vG,wG)$ be a DDH challenge and $\mathcal{A}$ and adversary who wins the above game with probability $\epsilon$.
  Set $P=uG$ in the experiment (this effectively sets the ``key'' to be $u$).
  Let $q$ be an upper bound on the number of queries that $\mathcal{A}$ makes to $H'$ before issuing it's challenge $t'$, let $j\in[q+1]$.
  Set $k=0$, $T=\emptyset$ and $\alpha'=\bot$.
  Every time $\mathcal{A}$ queries $H'$ on $\alpha$, if $(\alpha, \beta)\in T$, return $\beta G$.
  Otherwise, set $k=k+1$. If $k=j$, set $\alpha'=\alpha$ and $H'(\alpha)=vG$.
  If $k\neq j$, pick $\beta$ at random, set $H'(\alpha)=\beta G$ and $T=T\cup\{(\alpha,\beta)\}$.
  For each value $t_{i}$ sent by $\mathcal{A}$ in step 2 do as follows:
  If $(t_{i},\beta)\not\in T$ for some $\beta$, do as above, and return $B_{i}=\beta (uG)=u(\beta G)=u(H'(t_{i}))$.
  Let $t'$ be the challenge $\mathcal{A}$ sends in step 4.
  If $t'$ was among the $t_{i}$'s, or if $t'\neq\alpha'$ and $\alpha'\neq\bot$ output a random bit and stop.
  If $\alpha=\bot$, set $H'(t')=vG$. Return $B'=wG$.
  When $\mathcal{A}$ outputs $c'$, output $c'$ as well.

  Notice that, with probability at least $1/(q+1)$ the reduction correctly guesses which oracle query $\mathcal{A}$ will use as its challenge.
  Let $c'$ be the output of $\mathcal{A}$ in this case.
  If $c'=0$, then $\mathcal{A}$ believes that $B'=u(H'(t'))=u(vG)=wG$, while if $c'=1$ $\mathcal{A}$ will believe that $B'$ was random.
  This correspond exactly to whether or not $(uG,vG,wG)$ was a DDH tuple or not.
  Thus the reduction wins the DDH game with probability $\epsilon/(q+1)$, which is negligible.
\end{proof}

\paragraph{\emph{Point 3.}}
Now this last point---that the output of $\tPRF$ is indistinguishable from random follows from Lemma 2.
In particular, the primary device cannot compute $H(x,\PRFkey(H'(x)))$ himself and so in particular cannot see whether the output came from such a computation or whether it was picked at random.


  \section{Further Analysis Remarks}
\label{sec:analysis-remarks}

We provide further observations and remarks about the commercial encrypted cloud storage services that we analyzed in Section~\ref{sec:motivation}, and motivate their scores.

\paragraph{Tarsnap.}
Tarsnap does not provide conventional username-password authentication, but stores a local file on the user's device that contains per-file encryption keys for each encrypted file and access tokens that used to authenticate connections to the cloud storage.

\paragraph{Mega.}
Mega supports long-lived sessions; it also supports remote session termination, which can be used to disable long-lived sessions on a lost device (see Section 3.6 in \cite{mega-whitepaper}).
A 256-bit master key is derived from the user's password; the first half is used to encrypt files, while the second half is used for authentication.
The master key itself is static (i.e., generated once and never changed) and is further stored encrypted under the user's password.

For availability, Mega offers a trash-bin, although users can empty their bins, thus losing such files forever.
Users can recover from the external adversary changing their passwords by providing a cold copy of their master key, that Mega provides.

\paragraph{pCloud.}
pCloud encrypts files with a 256-bit key, password-protected with a ``crypto password''.
This password has to be entered by the user at every session---thus, long-lived sessions are not possible, and confidentiality and availability to the device stealing external adversary are provided.
Further, pCloud supports password hints, stored in plaintext in their servers.

Availability to more powerful external adversaries is not provided, because despite providing a trash-bin functionality, the encryption password can be changed from any device.

\paragraph{Sync.}
Sync, similarly to other password-based solutions, cannot send the authentication password to the server, as it would leak the encryption keys. 
However, Sync sends a weak bcrypt hash during authentication: one with only $256=2^{8}$ iterations---thus, bruteforce is possible for the cloud adversary.
It supports long-lived sessions, both in the browser and through their software.
It also supports an optional ``email-based password recovery'' feature that, when users opt-in, leaks the keys to the server.
Finally, Sync allows users to store unencrypted ``password hints'' on their servers.

Availability of Sync is good: it supports file versioning up to 1 year, thus protecting against malicious file deletions.
Further, if the user is logged in more than one device, if any external adversary would try to change the user's password these would remain logged in.

\paragraph{Woelkli.}
Woelkli protects a key file with the user password.
However, data is not end-to-end encrypted, as encryption happens on the server side---further confirmed by the fact that the web interface sends the user's password in clear during login.
This means that for the duration of a user session, confidentiality to the cloud provider cannot be guaranteed.
Further, it supports long-lived sessions, thus not providing confidentiality to the external adversary.

For availability, Woelkli provides a trash-bin, although with an emptying functionality that permanently removes the files from the cloud.
Further, if the password is changed on one device, other user devices cannot access anymore.

Woelkli, similarly to Sync, advertises and invites users to activate \textit{email-based password recovery} if they are worried about forgetting their passwords---however, to allow this feature, keys would be leaked to the service, losing all confidentiality guarantees to the cloud storage adversary.

\paragraph{SpiderOak.}
SpiderOak stores the user password, needed to decrypt the file encryption keys, in cleartext or as a MD5 hash on the client~\cite{DBLP:conf:ccs:DalskovO18}.
Further, it supports long-lived sessions, a threat to confidentiality against the external adversary.
Using the SpiderOak web interface leaks the encryption keys to the server, and SpiderOak allows storing a plaintext ``password hint'' on their servers.

For availability, SpiderOak supports file versioning, as a safeguard against malicious file deletion.
However, long-lived sessions and the ability to close other open sessions and change the password mean that the external adversary can lock out the legitimate user.

\paragraph{Tresorit.}
Tresorit allows configurable session length---long-lived sessions hamper confidentiality to the external adversary.
Further, keys are kept in plaintext on the user's device.

For availability, Tresorit regenerates encryption keys when the password is changed---thus, the external adversary could lock the legitimate user out.

\paragraph{Boxcryptor.}
Boxcryptor allows long-lived user sessions, that harm confidentiality against the external adversary.
User's encryption keys are password-protected with the user's password.

Availability of Boxcryptor is good under some conditions: it is dependant on the cloud storage backend provider's offer of file versioning.
Further, with a cold copy of the encryption key, users can restore access to their account even if the external adversary changed the password from the stolen or controlled device.

\paragraph{Zoolz.}
In Zoolz encryption can be configured during setup, where an encryption key can be generated from a user-selected password.
Once a key has been set, it cannot be changed later.
It supports long-lived sessions.
Deduplication is performed on encrypted files \emph{if} they are protected with the same password.

For availability, an adversary that has control of a user device with an open session can delete their data permanently.
However the attacker cannot easily change the user's password for existing data.

\paragraph{IDrive.}
IDrive supports setting a ``private encryption key'', although this is not the default.
If such a key is provided, then encryption (with the private encryption key) is performed on the device before data is sent to the server.
This key can only be changed by resetting the account, which wipes all data (or makes it impossible to decrypt).
The private encryption key is derived from a string picked by the user, and a small bit of data is stored on the server encrypted under this key for use during authentication (or to validate an entered key).
However, web access leaks keys and users' files to an ``intermediate'' IDrive server.
IDrive supports long-lived sessions, which means the keys have to be in plaintext in the local device.

For availability, IDrive supports file versioning.
However, if the external adversary changes the user's password, this automatically wipes all user's files, thus hampering availability. This is a threat because of long-lived sessions.


\end{appendix}

\end{document}